\newif\iffinal
\newcommand{\AP}{\mathsf{AP}}
\newcommand{\API}{{\mathsf{AP}_I}}
\newcommand{\APO}{{\mathsf{AP}_O}}
\newcommand{\LTLG}{\mathsf{G}}
\newcommand{\LTLF}{\mathsf{F}}
\newcommand{\LTLFG}{\mathsf{FG}}
\newcommand{\LTLGF}{\mathsf{GF}}
\newcommand{\LTLX}{\mathsf{X}}
\newcommand{\LTLR}{\mathsf{R}}
\newcommand{\LTLU}{\mathsf{U}}
\newcommand{\LTLW}{\mathsf{W}}
\newcommand{\NN}{\mathbb{N}}  
\newcommand\bbN{\NN}  
\newcommand{\TRUE}{\mathsf{true}}
\newcommand{\FALSE}{\mathsf{false}}
\newcommand\rej{\ensuremath{\mathit{rej}}}
\newcommand\acc{\ensuremath{\mathit{acc}}}
\newcommand\mc\mathcal
\newcommand\EnfPre{\LTLsquare\LTLdiamond}
\newcommand\pto\rightharpoonup
\newcommand{\specialcellC}[2][c]{%
  \begin{tabular}[#1]{@{}c@{}}#2\end{tabular}%
}
\newcommand{\trans}[1]{\stackrel{{#1}}{\rightarrow}}
\newcommand\smallvdots{\scalebox{0.6}{\ensuremath{\vdots}}}
\newcommand{\ak}[1]{{\color{green!55!black!55}{\tt \scriptsize/#1/}}}
\newcommand\li{\begin{itemize}}
\newcommand\lo{\begin{enumerate}}
\renewcommand{\-}\item
\newcommand\il{\end{itemize}}
\newcommand\ol{\end{enumerate}}
\newcommand\reboot{$\tt reboot$\xspace}
\newcommand\slugs{$\tt slugs$\xspace}
\newcommand\strix{$\tt strix$\xspace}
\newcommand\x\times
\begin{document}
\title{Fully Generalized Reactivity(1) Synthesis\thanks{This work has been partially supported by the DFG through Grant No.~322591867 (GUISynth) and the BMWi through Grant No.~19A21026E (SafeWahr).\\ This is an extended version of the paper accepted to TACAS'24.}\\(full version)}
%
%
\author{R\"udiger Ehlers and Ayrat Khalimov}
\authorrunning{R. Ehlers and A. Khalimov}
%
\institute{Clausthal University of Technology, Germany\\
\email{\{ruediger.ehlers,ayrat.khalimov\}@tu-clausthal.de}}
\maketitle              
\vspace{-2mm}
\begin{abstract}
  Generalized Reactivity(1) (GR(1)) synthesis is a reactive synthesis approach in which the specification is split into two parts: a symbolic game graph, describing the safe transitions of a system, a liveness specification in a subset of Linear Temporal Logic (LTL) on top of it. Many specifications can naturally be written in this restricted form, and the restriction gives rise to a scalable synthesis procedure -- the reasons for the high popularity of the approach. For specifications even slightly beyond GR(1), however, the approach is inapplicable. This necessitates a transition to synthesizers for full LTL specifications, introducing a huge efficiency drop. This paper proposes a synthesis approach that smoothly bridges the efficiency gap from GR(1) to LTL by unifying synthesis for both classes of specifications. The approach leverages a recently introduced canonical representation of omega-regular languages based on a chain of good-for-games co-Büchi automata (COCOA). By constructing COCOA for the liveness part of a specification, we can then build a fixpoint formula that can be efficiently evaluated on the symbolic game graph. The COCOA-based synthesis approach outperforms standard approaches and retains the efficiency of GR(1) synthesis for specifications in GR(1) form and those with few non-GR(1) specification parts.
\end{abstract}
\section{Introduction}
Reactive synthesis is the process of automatically computing a provably correct reactive system from its formal specification~\cite{Church63}. A safety-critical system is often developed twice: first, when it is described using a formal specification, and second, when a system is implemented according to this specification. The dream of reactive synthesis is to fully eliminate manual implementation phase.

Reactive synthesis is however computationally hard. For specifications in the commonly used linear temporal logic (LTL), checking whether an implementation exists is 2EXPTIME-complete \cite{DBLP:conf/icalp/PnueliR89}.
The classical approach to solve reactive synthesis from LTL is to first translate the LTL formula into a deterministic parity automaton, followed by solving the induced two-player parity game \cite{HMC-RS}. The system player wins this game if and only if there is an implementation satisfying the specification. It is the first phase of translating LTL to parity automaton that usually represents a bottleneck. This observation spurred a series of synthesis approaches. For instance, in bounded synthesis, either the maximal number of states that a system can have \cite{DBLP:journals/sttt/FinkbeinerS13} or the longest system response time \cite{DBLP:conf/cav/FiliotJR09} is restricted. If there exists a system realizing the specification, then there exists one that adheres to some bounds, and bounded synthesis works well whenever small bounds suffice for realizing the given specification. Another approach is to synthesize implementations for parts of the specification, and to then compose them into one that realizes the whole specification \cite{KPV06,SS13,FJR11}. The approach of \cite{meyer2018strix} avoids constructing one large deterministic parity automaton and instead constructs many smaller ones that---when composed together---represent the original specification. Such decomposition proved beneficial on practical examples \cite{SyntComp23}. Finally, there are approaches that consider ``synthesis-friendly'' subsets of LTL. Alur and La Torre identified a number of such LTL fragments with a simpler synthesis problem \cite{DBLP:journals/tocl/AlurT04}, and this eventually led to the introduction of Generalized Reactivity(1) synthesis by Piterman et al.~\cite{PPS06}, GR(1) for short. GR(1)  synthesis gained a lot of prominence and was applied in domains such as robotics \cite{DBLP:conf/irc/WongK17,DBLP:conf/staf/GritznerG17}, cyber-physical system control \cite{10.1145/3513091,DBLP:conf/hybrid/WongpiromsarnTOXM11}, and chip component design \cite{DBLP:journals/jcss/BloemJPPS12,DBLP:journals/sttt/GodhalCH13}. We describe it in more detail.

\looseness-1 In GR(1) synthesis, the specification is divided into two parts. The first part represents the \emph{safety properties} of a system and encodes a symbolic game graph. Each graph vertex encodes a valuation of last system inputs and outputs. The transitions in the graph represent how these variables can evolve in one step. For instance, a robot on a grid can move from its current cell to the left, right, up, or down, but cannot jump; this is easily encoded as a symbolic game graph. Secondly, there are \emph{liveness properties} of the following form: if certain vertices are visited infinitely often, then certain other vertices must be visited infinitely often as well. The liveness properties are encoded symbolically using LTL formulas of the shape $\bigwedge_i \LTLG\LTLF \varphi_i \to \bigwedge_j \LTLG\LTLF \psi_j$, where $\varphi_i$ and $\psi_j$ are Boolean formulas over input and output system propositions. Synthesis problems from many domains can be encoded naturally, or after some manual effort, into the GR(1) setting.

\looseness-1 Constraining specifications to GR(1) form reduces the synthesis problem’s complexity from doubly-exponential to singly-exponential (in the number of propositions), or polynomial when the number of propositions is fixed \cite{DBLP:journals/jcss/BloemJPPS12}. The GR(1) synthesis problem can be solved by evaluating a fixpoint formula on the symbolic game arena. The fixpoint formula defines the set of vertices from which the system player satisfies the GR(1) liveness properties while staying in the game arena. The simple shape of GR(1) liveness properties makes the fixpoint formula simple. Moreover, evaluating the fixpoint formula on the symbolic game graph can be done efficiently using Binary Decision Diagrams (BDDs, \cite{DBLP:journals/tc/Bryant86}) as the underlying data structure. These factors together --- efficient implementation and relatively expressive specification language --- made GR(1) synthesis popular.

GR(1) synthesis has a drawback. A single property outside of GR(1) -- for instance, ``eventually the robot always stays in some stable zone'' ($\LTLF\LTLG\ \mathit{inStableZone}$) -- makes GR(1) synthesis inapplicable. Switching to full-LTL synthesizers introduces an abrupt efficiency drop, as they do not take advantage of the simple structure of GR(1)-like specifications. For improving the practical applicability of reactive synthesis, a synthesis approach exhibiting a smooth efficiency curve on the way from GR(1) to LTL would hence be useful.
While there are are some GR(1) synthesis extensions (e.g.,~\cite{DBLP:conf/fm/AmramMP19,DBLP:conf/nfm/Ehlers11}), they only extend it by certain specification classes and consequently do not support full LTL.

This paper unifies synthesis for GR(1) and full LTL. Like in GR(1) synthesis, we aim at synthesis for specifications split into the safety part encoded as a symbolic game graph and the liveness part. Unlike the standard GR(1) synthesis, the liveness part can be any LTL or omega-regular property. For standard GR(1) specifications, our approach inherits the efficiency of GR(1) synthesis, including when a specification does not fall syntactically into this class, but is semantically a GR(1) specification. At the same time, for specifications that go beyond GR(1) and only have a few non-GR(1) components, our approach scales well.

Our solution is based on the same fixpoint-evaluation-of-symbolic-game-graph idea. Our starting point is a folklore approach based on solving parity games by evaluating fixpoint equations~\cite{DBLP:journals/corr/BruseFL14}. We modify it so that it becomes applicable to specifications given 
in the form of a chain of good-for-games co-Büchi automata (COCOA).
Such chains have recently been proposed as a new canonical representation of omega-regular languages \cite{DBLP:conf/fsttcs/EhlersS22}, and it has been shown how minimal and canonical COCOA can be computed in polynomial time from a deterministic parity automaton of the language. Our COCOA-based synthesis approach converts the liveness part of the specification into a parity automaton, constructs the chain, builds the fixpoint formula from the chain, and finally evaluates it on the symbolic game graph. We show that the fixpoint formula built from the chain has a structure similar to GR(1) fixpoint formulas. This is not the case for the folklore approach via parity games, and as a result, our COCOA-based synthesizer is roughly an order of magnitude faster. The COCOA-based synthesis approach inherits the efficiency of GR(1) synthesis, and it is also efficient on specifications slightly beyond GR(1). Finally, our approach is the first application of the new canonical representation of omega-regular languages.

\section{Preliminaries}

\subsection*{Automata and languages}

Let $\NN = \{0,1,2,\ldots\}$ be the set of natural numbers including $0$.
Let $\AP$ be a set of \emph{atomic propositions};
$2^\AP$ denotes the \emph{valuations} of these propositions.
A Boolean formula represents a set of valuations:
for instance, $\bar a \land b$, also written $\bar a b$, encodes valuations
in which proposition $a$ has value $\FALSE$ and $b$ is $\TRUE$.
A Boolean function maps valuations of propositions to either $\TRUE$ or $\FALSE$.
Binary decision diagrams (BDDs) are a data structure for manipulating such functions.

A \emph{word} is a sequence of proposition valuations $w = x_0 x_1 \ldots \in (2^\AP)^\omega \cup (2^\AP)^*$.
A word can be finite or infinite.
A \emph{language} is a set of infinite words.
Given a language $L$,
the \emph{suffix language} of $L$ for some finite word $p \in (2^\AP)^*$
is $\mathcal{L}(L,p) = \{x_0 x_1 \ldots \in (2^\AP)^\omega \mid p \cdot x_0 x_1 \ldots \in L\}$.
The words in this set are called \emph{suffix words}.
The set of all suffix languages of $L$ is the set $\{\mc L(L,p) \mid p \in (2^\AP)^*\}$.

\looseness-1 Automata over infinite words are used to finitely represent languages.
We consider parity and co-Büchi automata with transition-based acceptance.
A \emph{parity automaton} is a tuple $\mc A = (\Sigma,Q,q_0,\delta)$
with a finite alphabet $\Sigma$ (usually $\Sigma = 2^\AP$),
a finite set of states $Q$,
an initial state $q_0 \in Q$,
and a finite transition relation $\delta \subseteq Q \times \Sigma \times Q \times \NN$
satisfying $(q,x,q',c) \in \delta \Rightarrow (q,x,q',c')\not\in \delta$ for all $q,x,q'$ and $c'\neq c$.
An automaton is \emph{complete} if
for every state $q$ and letter $x$ there exists at least one pair $(q',c)\in Q\x\NN$
s.t.\
$(q,x,q',c) \in \delta$;
it is \emph{deterministic} if exactly one such pair $(q',c)$ exists.
Wlog.\ we assume that automata are complete.
An automaton is \emph{co-Büchi} if only colors $1$ and $2$ occur in $\delta$,
and then we call the transitions with color $1$ \emph{rejecting} and those with color $2$ \emph{accepting}.

A \emph{run} of $\mc A$ on a word $w = x_0 x_1 \ldots \in \Sigma^\omega$
is a sequence $\pi = \pi_0 \pi_1 \ldots \in Q^\omega$
starting in $\pi_0 = q_0$ and such that $(\pi_i,x_i,\pi_{i+1},c_i) \in \delta$ for some $c_i$ for every $i \in \bbN$;
the induced \emph{color sequence} $c=c_0 c_1 \ldots$ is uniquely defined by $w$ and $\pi$.
A run is \emph{accepting}
if the lowest color occurring infinitely often in the induced color sequence is even (``min-even acceptance'').
When this minimal color is uniquely defined,
e.g.\ when there is only one accepting run,
it is called \emph{the color of $w$} wrt.\ $\mc A$.
A word is \emph{accepted} if it has an accepting run.
The automaton's language $\mathcal{L}(\mathcal{A})$ is the set of accepted words.
The language of the automaton $\mc A'$ derived from $\mc A$ by changing the initial state to $q$ is denoted by $\mc L(\mc A, q)$.

A \emph{co-Büchi language} is a language representable by a nondeterministic (equiv., deterministic) co-Büchi automaton.
The Co-B\"uchi languages are a strict subset of the omega-regular languages.

An automaton is \emph{good-for-games}
if there exists a strategy $f : \Sigma^* \to Q$ to resolve the nondeterminism to produce accepting runs on the accepted words, formally:
for every infinite word $w=x_0x_1\ldots$,
the sequence $\pi_0 \pi_1\ldots$
defined by $\pi_i = f(x_0 \ldots x_{i-1})$ for all $i\in\NN$
is a run, \emph{and} it is accepting whenever $w$ belongs to the language.

\subsection*{Games and our realizability problem}
\paragraph{LTL.}
A commonly used formalism to represent system specifications is \emph{Linear Temporal Logic} (LTL, \cite{DBLP:conf/focs/Pnueli77}).
It uses temporal operators $\LTLU$, $\LTLX$, and derived ones $\LTLG$ and $\LTLF$,
which we do not define here.
For details, we refer the reader to \cite{PP18}.

\paragraph{Games.}
An edge-labelled \emph{game} is a tuple
$G = (\AP_I,\AP_O,V,v_0,\delta,obj)$
where
$V$ is a finite set of vertices,
$v_0\in V$ is initial,
$\delta : V \x 2^\API\x 2^\APO \pto V$ is a partial function describing possible moves
(safety specification),
and $obj$ is a winning objective (liveness specification).
A \emph{play} is a maximal (finite or infinite) sequence of transitions
of the form
$(v_0,i_0,o_0,v_1)(v_1,i_1,o_1,v_2)(v_2,i_2,o_2,v_3)\ldots$;
the corresponding sequence $(i_0\cup o_0)(i_1 \cup o_1)\ldots$ is called the \emph{action sequence}.
An infinite play is winning for the system if it satisfies the objective $obj$;
when $obj$ is an LTL objective over $\API\cup\APO$,
the infinite play satisfies $obj$
iff the action sequence satisfies it.
A system \emph{strategy} is a function $f : (2^\API)^+ \to 2^\APO$.
The game is won by the system if it has a strategy $f$ such that
every play $(v_0,i_0, o_0,v_1)(v_1,i_1, o_1,v_2)\ldots$ is infinite and it satisfies the objective,
where $o_j = f(i_0 \ldots i_j )$ for all $j$.
To define parity games,
the winning objective $obj$ is set to be a parity-assigning function
$obj : V \to \bbN$,
and then an infinite play satisfies $obj$
iff
the minimal parity visited infinitely often in the sequence
$obj(v_0) obj(v_1) \ldots$
is even (min-even acceptance on states).

The \emph{enforceable predecessor operator} $\EnfPre$
reads a set of tuples $\Phi \subseteq 2^\AP \x V$ and
returns the set of positions
from which the system can enforce taking one of the transitions into the destination set:
\begin{equation}
  \EnfPre(\Phi) =
  \{
    v \in V \mid
    \forall i.\exists o:
    (i \cup o, \delta(v,i,o)) \in \Phi
  \}
\end{equation}

\paragraph{Symbolic games with LTL objectives.}
Games can be represented symbolically.
For instance, the vertices can be encoded as valuations of Boolean variables $\AP$,
and transitions between the vertices can be encoded using a Boolean formula.
This paper focuses on solving symbolic games with LTL objectives:
\begin{quote}
  \emph{Given a symbolic game with LTL objective. Who wins the game?}
\end{quote}
The particular symbolic representation is not important as long as it provides
the operations for union, intersection, and complementation of sets of label-position tuples, and
the enforceable predecessor operator $\EnfPre$.
This paper focuses exclusively on the realizability problem;
the extraction of compact and efficient implementations merits a separate study.

\paragraph{Mu-calculus fixpoint formulas.}
For an introduction to using fixpoint formulas in synthesis,
we refer the reader to~\cite{HMC-RS},
and to~\cite{DBLP:reference/mc/BradfieldW18,niwinski2001rudiments} for mu-calculus in general.
The fixpoint formulas use the greatest ($\nu$) and least ($\mu$) fixpoint operators,
and the enforceable-predecessor operator $\EnfPre$.
For instance,
the formula $\nu Y. \mu X. \EnfPre(Y \wedge (\overline{x} \vee X))$ represents the biggest set of vertices such that from all vertices in the set, the system can enforce that either $x$ does not hold along the next transition and this transition leads back to the same set, or the play gets closer to a position from which this can be enforced. This formula hence characterizes the positions from which the system can enforce that $\overline{x}$ holds infinitely often along a play.

\subsection*{Generalized Reactivity(1)}
Generalized Reactivity(1) is a class of assume-guarantee specifications
that includes safety and liveness components.
It gained popularity because many specifications naturally fall into the GR(1) class,
and the restricted nature of GR(1) admits an efficient synthesis approach.
For the purpose of this paper,
we define a GR(1) specification as a game
$G_\text{gr1} = (\API,\APO,V,v_0,\delta,\Phi)$
with an
LTL winning objective of the form
$\Phi = \bigwedge_{i=1}^m \LTLGF a_i \to \bigwedge_{j=1}^n \LTLGF g_j$,
where each assumption $a_i$ and guarantee $g_j$
are Boolean formulas over $\AP_I\cup\AP_O$.
The original GR(1) specification class~\cite{PPS06} uses logical formulas to describe the symbolic arena.

\subsection*{Solving GR(1) games using fixpoints}
We now show how to solve GR(1) games by evaluating fixpoint formulas on GR(1) game arenas.
Consider a GR(1) game $G_\text{gr1}=(\API,\APO,V,v_0,\delta,\Phi)$ with
$\Phi=\bigwedge_{i=1}^m \LTLGF a_i \to \bigwedge_{j=1}^n \LTLGF g_j$.
The set of positions $W \subseteq V$ from which the system player wins the game is characterized by the fixpoint equation~\cite{DBLP:conf/cav/EhlersR16,DBLP:journals/jcss/BloemJPPS12}:
\begin{equation} \label{eqn:fixpoint-GR1}
  W = \nu Z. \bigwedge_{j=1}^n \mu Y. \bigvee_{i=1}^m \nu X. \LTLsquare\LTLdiamond \big[(g_j \wedge Z) ~\lor~ Y ~\lor~ (\neg a_i \wedge X)\big]
\end{equation}
This fixpoint formula ensures that
the system chooses to move into states of one of the three kinds:
(1) states where it waits for an environment goal $a_i$ to be reached, possibly forever
    ($\neg a_i \land X$),
(2) states that move the system closer to reaching its goal number $j$ ($Y$), or
(3) winning states that satisfy system goal number $j$ $(g_j \land Z)$.
The conjunction over all guarantees to the right of $\nu Z$ ensures
that all liveness guarantees are satisfied from all winning positions
(unless some environment liveness assumption is violated).
The disjunction over the environment goals permits the system
to wait for the satisfaction of any of the environment liveness goals.
At the end of evaluating the fixpoint formula,
$Z$ consists of the winning positions for the system.
The system wins the GR(1) game
if and only if $W$ includes $v_0$.

\paragraph{Example.}
Consider a GR(1) game
with
$\API=\{u\}$, $\APO=\{x,y\}$,
and $\Phi = \LTLGF u \to (\LTLGF x \land \LTLGF y)$.
Equation~\ref{eqn:fixpoint-GR1} becomes:
\begin{equation} \label{eqn:exampleFixpoingGR1}
W =
\nu Z .
\left[
  \begin{matrix}
    \mu Y. \nu X. \LTLsquare \LTLdiamond (xZ ~\vee~ Y ~\vee~ \bar u X ) ~\land\,\\
    \mu Y. \nu X. \LTLsquare \LTLdiamond (yZ ~\vee~ Y ~\vee~ \bar u X ) ~~~~
\end{matrix}
  \right]
\end{equation}
For conciseness,
we write $xZ$ instead of $x \land Z$,
and $\bar a$ instead of $\neg a$.

\subsection*{Solving symbolic parity games using fixpoints}
Consider a parity game $(\API,\APO,V,v_0,\delta,c)$ with colors $\{0,\ldots,n\}$.
The winning positions for the system player in such game are characterized by the fixpoint formula from~\cite{W02,DBLP:journals/corr/BruseFL14} adapted to our setting:
\begin{equation} \label{eqn:parityFixpoint}
  W = \nu X^0 \mu X^1 \ldots \sigma X^n. \LTLsquare \LTLdiamond (\vee_{i=1}^{n}\mathsf{color}_i \wedge X^i)
\end{equation}
The operators $\nu$ and $\mu$ alternate,
so the symbol $\sigma$ is $\mu$ if $n$ is odd and $\nu$ if $n$ is even;
${\sf color}_i = \{v \mid c(v) = i\}$ denotes the set of vertices of color $i$.

\subsection*{Solving symbolic LTL games using fixpoints}
\label{page:solving-ltl-games-via-parity}

Let $G$ be a game with LTL objective $\Phi$.
We can construct a deterministic parity automaton $\mc A$ for $\Phi$,
build the product parity game $G \otimes \mc A$, and
solve it with the help of Equation~\ref{eqn:parityFixpoint}.
An alternative approach is to embed the product into the fixpoint formula
by using vector notation \cite{DBLP:reference/mc/BradfieldW18}.

Consider an example.
Let $G=(\API,\APO,V,v_0,\delta,\Phi)$ be a game with $\Phi = \LTLGF u \to (\LTLGF x \land \LTLGF y)$.
The parity automaton for $\Phi$ is shown on Figure~\ref{fig:gr1Example-parity}.
It has two states, $q_0$ and $q_1$, and uses three colors.
For three colors,
the parity fixpoint formula in Equation~\ref{eqn:parityFixpoint} has structure $\nu Z.\mu Y.\nu X.$
We index each set variable with the state of the automaton,
thus $Z$ is split into $Z_0$ and $Z_1$, etc.
The formula is:
\begin{equation} \label{eqn:exampleParityFixpoint}
\left[ \begin{matrix} W_0 \\ W_1 \end{matrix} \right] =
\nu \left[ \begin{matrix} Z_0 \\ Z_1 \end{matrix} \right]\!\!.
\mu \left[ \begin{matrix} Y_0 \\ Y_1 \end{matrix} \right]\!\!.
\nu \left[ \begin{matrix} X_0 \\ X_1 \end{matrix} \right]\!\!.
\LTLsquare \LTLdiamond\!\left[
\begin{matrix}
   x Z_1 ~\vee~ \bar x u Y_0 ~\vee~ \bar x \bar u X_0  \\
   y Z_0 ~\vee~ \bar y u Y_1 ~\vee~ \bar y \bar u X_1
\end{matrix}
\right]
\end{equation}
The top row encodes the transitions from state $q_0$ of the parity automaton:
$q_0 \trans{x:0} q_1$ becomes $xZ_1$,
$q_0 \trans{\bar xu:1} q_1$ becomes $\bar x u Y_1$,
$q_0 \trans{\bar x\bar u:2} q_0$ becomes $\bar x \bar u X_0$.
After formula evaluation,
the variable $W_0$ contains game positions winning for the system
wrt.\ the parity automaton $\mc A_{q_0}$, while $W_1$ does so wrt.\ $\mc A_{q_1}$.

\begin{figure}[tb]
  \centering
  \begin{tikzpicture}
    \node[thick,state,inner sep=0pt,minimum size=0.7cm,fill=black!10!white] (qA) at (-6,-2) {$q_0$};
    \node[thick,state,inner sep=0pt,minimum size=0.7cm,fill=black!10!white] (qB) at (-4,-2) {$q_1$};

    \draw[->,thick] (qA) to[loop left] node[left] {$\bar x \bar u{:}2$} (qA);
    \draw[->,thick] (qA) to[loop below] node[below] {$\bar x u{:} 1$} (qA);
    \draw[->,thick] (qA) to[bend left=10] node[above] {$x{:}0$} (qB);

    \draw[->,thick] (qB) to[loop right] node[right] {$\bar y \bar u{:}2$} (qB);
    \draw[->,thick] (qB) to[loop below] node[below] {$\bar y u{:}1$} (qB);
    \draw[->,thick] (qB) to[bend left=10] node[below] {$y{:}0$} (qA);

    \draw[->,thick] (-6.7,-1.6) -- (qA);
    \draw[fill=black] (-6.7,-1.6) circle (0.05cm);
  \end{tikzpicture}
  \vspace{-2mm}
  \caption{Parity automaton for $\LTLGF u \to (\LTLGF x \land \LTLGF y)$. Transitions are labeled by the proposition valuations for which they can be taken as well as the color of the transition.}
  \label{fig:gr1Example-parity}
\end{figure}
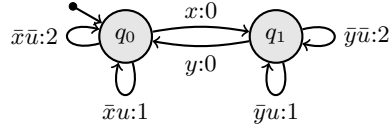

In general,
suppose we are given a game whose winning objective is a deterministic parity automaton $(2^\AP,Q,q_0,\delta)$ with transition function $\delta : Q\x\Sigma \to Q\x \bbN$ that uses $n$ colors $\{0,\ldots,n-1\}$.
The set of winning game positions is characterized by the fixpoint formula:
\begin{equation}\label{eqn:fixpoint-parity}
  \left[\begin{matrix} W_1 \\ \smallvdots \\ W_{|Q|}\end{matrix}\right] =
  \nu \left[ \begin{matrix} X_1^0 \\ \smallvdots \\ X_{|Q|}^0 \end{matrix} \right]\!\!.
  \mu \left[ \begin{matrix} X_1^1 \\ \smallvdots \\ X_{|Q|}^1 \end{matrix} \right] \ldots~
  \sigma \left[ \begin{matrix} X_1^{n-1} \\ \smallvdots \\ X_{|Q|}^{n-1} \end{matrix} \right]\!\!.
  \LTLsquare \LTLdiamond\!
  \left[
  \begin{matrix}
    \psi_1 \\
    \smallvdots \\
    \psi_{|Q|}
  \end{matrix}
  \right]
\vspace{-1mm}
\end{equation}
\begin{equation*}
\text{where for all } j \in \{1, \ldots, |Q|\}, \text{we have }  \psi_j = \bigvee_{\scalebox{0.8}{\specialcellC{$x \in 2^\AP$\vspace{-0.5mm}\\$\text{let } (q,c)=\delta(q_j,x)$}}}
  x \land X_{ind(q)}^c
\end{equation*}
where $ind: Q \to \{1,\ldots,|Q|\}$ is some state numbering (one-to-one) that maps the initial automaton state $q_0$ to $1$.
The game is won by the system if and only if the initial game position belongs to $W_1$.


\section{Chains of Good-for-Games co-Büchi Automata}
\label{sec:COCOA}
This section reviews the chain of good-for-games co-Büchi automata representation \cite{DBLP:conf/fsttcs/EhlersS22} for $\omega$-regular languages used by our synthesis approach in Section~\ref{sec:synthesisFromCOCOA}.

Like parity automata,
a chain of co-Büchi automaton representation of a language assigns colors to words.
The central difference is that the chain representation relies on a sequence of automata,
each taking care of a single color. 

\begin{definition}
  Let $L \subseteq \Sigma^\omega$ be an omega-regular language.
  A falling chain of languages $L_1 \supset L_2 \supset \ldots \supset L_n$ is a \emph{chain-of-co-Büchi representation} of $L$ if
\begin{itemize}
\item every language $L_i$ for $i \in \{1,\ldots,n\}$ is a co-Büchi language, and
\item for every $w \in \Sigma^\omega$,
      the word $w$ is in $L$ if and only if $w \not\in L_1$ or the highest index $i$ such that $w \in L_i $ is even.
\end{itemize}
\end{definition}
\emph{Examples.}
The universal language $\Sigma^\omega$ has the singleton-chain $L_1 = \emptyset$,
and the empty language has the chain $(L_1=\Sigma^\omega) \supset (L_2 = \emptyset)$.
The language of the LTL formula $\LTLG\LTLF a$ over a single atomic proposition $a$
is expressed by the chain $(L_1 = L(\LTLF\LTLG \overline{a})) \supset (L_2 = \emptyset)$,
and $L(\LTLF\LTLG a)$ by $(L_1 = \Sigma^\omega) \supset (L_2 = L(\LTLF\LTLG a)) \supset (L_3 = \emptyset)$.


The definition of the natural color of a word from \cite{DBLP:conf/fsttcs/EhlersS22} provides a canonical way to represent $L$ as a chain of co-Büchi languages  $L_1 \supset L_2 \supset \ldots \supset L_n$,
which uses the minimal number of colors.
Moreover,
Abu Radi and Kupferman describe a procedure to construct a minimal and canonical good-for-games co-B\"uchi automaton for a given co-B\"uchi language \cite{DBLP:journals/lmcs/RadiK22}.
Thus, every omega-regular language has a canonical minimal chain-of-co-B\"uchi-automata representation (\emph{\textbf{COCOA}}).

The canonization procedure in \cite[Thm.4.7]{DBLP:journals/lmcs/RadiK22} ensures the following property.
\begin{lemma}[\cite{DBLP:journals/lmcs/RadiK22}] \label{lem:canon-gfg-prop}
  Fix a canonical GFG co-B\"uchi automaton $\cal A$ computed by \cite[Thm.4.7]{DBLP:journals/lmcs/RadiK22}.
  For every state $q$ and letter $x$, either there is
  \li
  \- exactly one accepting transition, or there are
  \- one or more rejecting transitions.
     In this case:
     \li
     \- all successors of $q$ on $x$ share the same suffix language $L'$,
        i.e., for every two successors $s_1$ and $s_2$ of $q$ on $x$: $L(\mc A, s_1) = L(\mc A, s_2)$, and
     \- for every state $q'$ with suffix language $L'$, there is a rejecting transition to $q'$ from $q$ on $x$.
     \il
  \il
\end{lemma}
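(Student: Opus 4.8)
\medskip\noindent
The statement is a consequence of the normal form that the canonicalization of \cite{DBLP:journals/lmcs/RadiK22} imposes, so the plan is to read it off from the structural invariants established there, together with one elementary auxiliary observation about state languages in a good-for-games co-B\"uchi automaton. First I would recall the two invariants we need. The canonical automaton $\mc A$ of \cite[Thm.~4.7]{DBLP:journals/lmcs/RadiK22} is \emph{safe deterministic} --- the sub-automaton obtained by keeping only the accepting transitions is deterministic --- so every pair $(q,x)$ carries at most one accepting transition; and it is \emph{normalized}, in particular its transitions are colour-homogeneous, so no $(q,x)$ carries both an accepting and a rejecting transition. Since $\mc A$ is complete, every $(q,x)$ carries at least one transition. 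Putting these together yields the claimed dichotomy, and in the first case the accepting transition is unique.

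For the second case I would first record a one-line fact: for any state $q$, letter $x$, and successor $q'$ of $q$ on $x$, one has $\mc L(\mc A,q') \subseteq \mc L(\mc L(\mc A,q),x)$, since prepending the single transition $q \trans{x} q'$ to an accepting run from $q'$ yields an accepting run from $q$ (one extra leading transition is irrelevant for co-B\"uchi acceptance); and, running the same argument in reverse on the first transition of an accepting run from $q$ on a word starting with $x$, the union of the $\mc L(\mc A,q')$ over all successors $q'$ of $q$ on $x$ equals $\mc L(\mc L(\mc A,q),x)$. The key fact I then invoke from \cite{DBLP:journals/lmcs/RadiK22} is that in the canonical form the rejecting transitions out of a given $(q,x)$ are exactly the transitions to states whose language equals $\mc L(\mc L(\mc A,q),x)$: a rejecting transition to a state with a properly smaller language is ``useless'' and is not present, while \emph{every} state with the matching language is a rejecting target. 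Writing $L'$ for this common language, all rejecting successors of $q$ on $x$ then have language $L'$, hence share the same suffix language --- the first sub-property --- and every state with suffix language $L'$ is a rejecting target of $q$ on $x$ --- the second.

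The part that needs the most care --- and the only real obstacle --- is justifying this ``no useless rejecting transitions plus saturation'' description against the actual construction of \cite{DBLP:journals/lmcs/RadiK22}, which presents the inter-component transition structure in terms of safe components, their safe languages, and homogeneity/totality conditions on $\delta$ rather than directly in terms of state languages. Concretely I would verify: (i) that within one safe component all states reached along a common prefix carry the same suffix language, so that a state's suffix language together with its prefix class pins down its safe component; (ii) that the canonical form is simultaneously minimal with respect to rejecting transitions --- forcing the ``no useless transition'' direction --- and total in the relevant sense --- forcing saturation; and (iii) that the transition relation so obtained still recognizes the language and is good-for-games, so that $\mc A$ is indeed the automaton output by \cite[Thm.~4.7]{DBLP:journals/lmcs/RadiK22}. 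Once this matching against the definitions of \cite{DBLP:journals/lmcs/RadiK22} is carried out, the lemma follows.
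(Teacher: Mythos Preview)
The paper does not prove this lemma at all: it is stated with attribution to \cite{DBLP:journals/lmcs/RadiK22} and used as a black box, so there is no ``paper's own proof'' to compare against. Your proposal therefore goes beyond what the paper does, attempting to unpack why the lemma follows from the canonical construction of Abu Radi and Kupferman.

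As a sketch, your outline is on the right track. You correctly isolate the two structural ingredients---safe determinism (at most one accepting transition per $(q,x)$) and homogeneity (no mixing of accepting and rejecting transitions on the same $(q,x)$)---that together with completeness yield the dichotomy. For the rejecting case, your description of rejecting transitions as going exactly to the states whose language equals the residual $\mc L(\mc L(\mc A,q),x)$ is the right characterization, and you are honest that the real work lies in matching this against the safe-component presentation in \cite{DBLP:journals/lmcs/RadiK22}. That matching is indeed where the argument lives; your items (i)--(iii) identify the right obligations but do not discharge them, so what you have is a proof plan rather than a proof. For the purposes of the present paper, which simply cites the lemma, that is already more than is required.
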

\begin{corollary}[\cite{DBLP:journals/lmcs/RadiK22}] \label{cor:lem:canon-gfg-prop}
  Fix a canonical GFG co-B\"uchi automaton $\mc A$.
  For every finite prefix $p$,
  all states in which $\mc A$ ends after reading $p$ share the same language.
\end{corollary}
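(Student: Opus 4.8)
The plan is to prove the statement by induction on the length of $p$, invoking Lemma~\ref{lem:canon-gfg-prop} at each step. In fact I would prove the slightly sharper claim that every state in which $\mc A$ ends after reading $p$ has suffix language exactly $\mc L(L(\mc A),p)$; the corollary is then the special case that this language does not depend on which of those states we pick. The base case $p=\varepsilon$ is immediate, since the only such state is $q_0$ and $L(\mc A,q_0)=L(\mc A)=\mc L(L(\mc A),\varepsilon)$.

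For the inductive step write $p=p'x$. A state is reachable after reading $p'x$ exactly if it is an $x$-successor of some state reachable after reading $p'$, and by the induction hypothesis every state $q$ reachable after $p'$ satisfies $L(\mc A,q)=\mc L(L(\mc A),p')$. Hence it suffices to establish the following local fact: for every state $q$ and every $x$-successor $q'$ of $q$ we have $L(\mc A,q')=\mc L(L(\mc A,q),x)$. Applying it with $q$ reachable after $p'$ yields $L(\mc A,q')=\mc L(\mc L(L(\mc A),p'),x)=\mc L(L(\mc A),p'x)$, which is independent of the choice of $q$ and of $q'$, as required.

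To prove the local fact, the inclusion $L(\mc A,q')\subseteq \mc L(L(\mc A,q),x)$ holds for any automaton: an accepting run on $w\in L(\mc A,q')$ starting in $q'$ can be prefixed with the transition $q\trans{x}q'$, and since one extra transition does not change which colors occur infinitely often, this is an accepting run from $q$ on $xw$, so $xw\in L(\mc A,q)$. For the reverse inclusion, take $w$ with $xw\in L(\mc A,q)$; by definition there is an accepting run from $q$ on $xw$, whose first transition goes to some $x$-successor $q''$ of $q$ and whose tail is an accepting run from $q''$ on $w$, so $w\in L(\mc A,q'')$. Now Lemma~\ref{lem:canon-gfg-prop} tells us that all $x$-successors of $q$ share the same suffix language: either there is a single (accepting) $x$-transition out of $q$, or there are only rejecting ones, which by the lemma all lead to states with one and the same language. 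Consequently $L(\mc A,q')=L(\mc A,q'')\ni w$, completing the local fact and the induction.

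The only real subtlety is this reverse inclusion. A priori, the nondeterminism of $\mc A$ could mean that a word $xw\in L(\mc A,q)$ is accepted only through some of the $x$-successors of $q$, so that a different $x$-successor $q'$ fails to accept $w$ and hence has a strictly smaller language; the canonicity property in Lemma~\ref{lem:canon-gfg-prop} is precisely what forbids this, by making all $x$-successors of $q$ language-equivalent. A pleasant side effect is that, once this structural fact is available, we need not reason about the good-for-games strategy of $\mc A$ at all: the bare existence of an accepting run from $q$ on $xw$ — which is just the definition of $xw\in L(\mc A,q)$ — already suffices.
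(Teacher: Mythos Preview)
Your proof is correct. The paper does not spell out a proof of this corollary at all; it merely states it as an immediate consequence of Lemma~\ref{lem:canon-gfg-prop} (and cites the external source). Your induction on $|p|$ together with the local fact that all $x$-successors of any state share the same language is exactly the natural way to unfold that implicit reasoning, so your approach is essentially the same as what the paper intends.
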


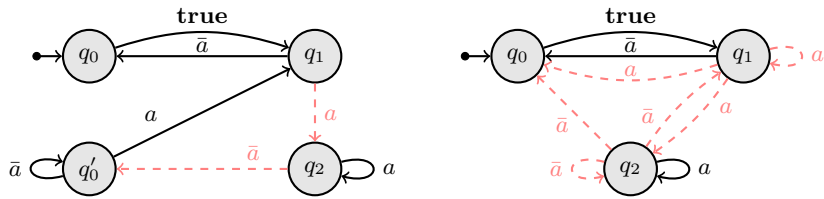
\begin{figure}[tb]
\centering
\vspace{-1mm}
\begin{tikzpicture}
\node[thick,state,inner sep=0pt,minimum size=0.7cm,fill=black!10!white] (qA) at (-6,-2) {$q_0$};
\node[thick,state,inner sep=0pt,minimum size=0.7cm,fill=black!10!white] (qB) at (-3,-2) {$q_1$};
\node[thick,state,inner sep=0pt,minimum size=0.7cm,fill=black!10!white] (qC) at (-3,-3.5) {$q_2$};
\node[thick,state,inner sep=0pt,minimum size=0.7cm,fill=black!10!white] (qD) at (-6,-3.5) {$q'_0$};

\draw[->,thick] (qA) to[out=18,in=162] node[above] {$\mathbf{true}$} (qB);
\draw[->,thick] (qB) to node[above=-0.6mm] {$\bar a$} (qA);

\draw[->,thick] (qC) to[loop right] node[right] {$a$} (qC);
\draw[->,thick] (qD) to[loop left] node[left] {$\bar a$} (qD);
\draw[->,thick] (qD) to node[above left,pos=0.3] {$a$} (qB);

\draw[->,thick,dashed,color=red!50!white] (qB) to[bend right=0] node[above right,pos=0.8] {$a$} (qC);
\draw[->,thick,dashed,color=red!50!white] (qC) to node[above,pos=0.2] {$\bar{a}$} (qD);

\draw[->,thick] (-6.7,-2) -- (qA);
\draw[fill=black] (-6.7,-2) circle (0.05cm);
\end{tikzpicture}
$\quad \quad$%
\begin{tikzpicture}
\node[thick,state,inner sep=0pt,minimum size=0.7cm,fill=black!10!white] (qA) at (-6,-2) {$q_0$};
\node[thick,state,inner sep=0pt,minimum size=0.7cm,fill=black!10!white] (qB) at (-3,-2) {$q_1$};
\node[thick,state,inner sep=0pt,minimum size=0.7cm,fill=black!10!white] (qC) at (-4.5,-3.5) {$q_2$};

\draw[->,thick] (qA) to[out=18,in=162] node[above] {$\mathbf{true}$} (qB);
\draw[->,thick] (qB) to[bend left=0] node[above=-0.6mm] {$\bar a$} (qA);

\draw[->,thick] (qC) to[loop right] node[right] {$a$} (qC);

\draw[->,thick,dashed,color=red!50!white] (qB) to[bend left=18] node[above=-0.6mm] {$a$} (qA);
\draw[->,thick,dashed,color=red!50!white] (qB) to[loop right] node[right] {$a$} (qB);
\draw[->,thick,dashed,color=red!50!white] (qB) to[bend left=10] node[below right=-0.5mm,pos=0.2] {$a$} (qC);
\draw[->,thick,dashed,color=red!50!white] (qC) to[bend left=10] node[above left=-0.5mm,pos=0.2] {$\bar a$} (qB);
\draw[->,thick,dashed,color=red!50!white] (qC) to node[below left=-0.5mm,pos=0.5] {$\bar a$} (qA);
\draw[->,thick,dashed,color=red!50!white] (qC) to[loop left,in=195,out=165,looseness=8] node[left] {$\bar a$} (qC);

\draw[->,thick] (-6.7,-2) -- (qA);
\draw[fill=black] (-6.7,-2) circle (0.05cm);
\end{tikzpicture}
\vspace{-2mm}
  \caption{%
    Example of a deterministic co-Büchi automaton (on the left) and an equivalent canonical good-for-games co-Büchi automaton for the same language.
    Both automata have minimal numbers of states.
    Rejecting transitions are dashed.
    GFG co-B\"uchi automata are exponentially more succinct than deterministic ones~\cite{DBLP:conf/icalp/KuperbergS15}.}
\label{fig:examplecoBuchi}
\end{figure}
Figure~\ref{fig:examplecoBuchi} shows an example of a canonical good-for-games co-Büchi automaton,
and Figure~\ref{fig:gr1Example} on page~\pageref{fig:gr1Example} an example of a COCOA.

\subsection*{Strategies to get back on the track}

Every GFG automaton has a strategy to resolve its nondeterminism
such that a word is accepted if and only if the run adhering to this strategy is accepting.
We allow such strategies to diverge for a finite number of steps,
and show that this divergence does not affect the acceptance by canonical GFG automata.

Given a COCOA $\mc A^1,\ldots,\mc A^n$,
define the \emph{natural color} of a word to be the largest level $l$ such that
$\mc A^l$ accepts the word,
or $0$ if no such $l$ exists.
Thus, a word is accepted by the COCOA if and only if the natural color is even.

\paragraph{GFGness strategies $f^l$.}
Let $f^l : \Sigma^* \to Q^l$ be a GFG witness resolving nondeterminism in $\mc A^l$,
for every $l \in \{1,\ldots,n\}$;
we call $f^l$ a golden strategy of $\mc A^l$,
and the induced run for some given word is called its \emph{golden run}.

\paragraph{Restrictions $g^l$.}
The synthesis approach, which will be described later, considers combined runs of all automata.
Its efficiency depends on the number of reachable states in $Q^1 \x \ldots \x Q^n$,
so it is beneficial to reduce this number.
To this end,
we introduce a restriction on successor choices.
We first define a helpful notion:
for a co-B\"uchi automaton $\mc A$ and its state $q$,
let $L^{acc}(q)$ denote the set of words which have a run from $q$ visiting only accepting transitions.
For several automata $\mc A^1,\ldots,\mc A^l$ and their states $q^1,\ldots,q^l$,
define $L^{acc}(q^1,\ldots,q^l) = \bigwedge_i L^{acc}(q^i)$.
Then, for $l \in \{1,\ldots,n\}$, define a \emph{restriction} function
$g^l : Q^l \x \Sigma \x Q^1 \x \ldots \x Q^{l-1} \to 2^{Q^l}$:
for every $q^l$, $x$, $r^1,\ldots,r^{l-1}$,
let
$g^l(q^l, x, r^1,\ldots,r^{l-1}) = S \subseteq \delta^l(q^l,x)$
be a maximal set such that
for every $r^l \in S$
there exists no other $\tilde r^l \in S$ with $L^{acc}(r^1,\ldots,r^{l-1},\tilde r^l) \subseteq L^{acc}(r^1,\ldots,r^l)$.
\label{def:restriction-function}
Intuitively,
given a current state $q^l$ of the automaton $\mc A^l$,
a letter $x$, and successor states $r^1,\ldots,r^{l-1}$ of the automata on lower levels,
the function $g^l$ returns a set of states among which $\mc A^l$ should pick a successor.
Runs $\rho^1 = q^1_0 q^1_1\ldots,\ldots,\rho^n=q^n_0 q^n_1 \ldots$ of $\mc A^1,\ldots,\mc A^n$
on a word $x_0 x_1 \ldots$
\emph{satisfy}
restrictions $g^1,\ldots,g^n$
if for every level $l \in \{1,\ldots,n\}$ and step $i \in \bbN$:
$q^l_{i+1} \in g^l(q^l_i,x_i,q^1_{i+1},\ldots,q^{l-1}_{i+1})$.
Strategies $f^l : \Sigma^* \to Q^l$ for $l \in \{1,\ldots,n\}$ \emph{satisfy} restrictions $g^1,\ldots,g^n$ if on every word the strategies yield runs satisfying the restrictions.

The following lemma states that requiring runs of $\mc A^1,\ldots,\mc A^n$
to satisfy the restrictions $g^1,\ldots,g^n$ preserves the natural colors and the GFGness.

\begin{lemma}
  \label{lem:better-gfg-strategies}
  There exist strategies $f^l : \Sigma^* \to Q^l$ for $l \in \{1,\ldots,n\}$ satisfying the restrictions $g^1,\ldots,g^n$ such that
  for every word of a natural color $c$,
  the strategies yield accepting runs $\rho^1,\ldots,\rho^c$ of $\mc A^1,\ldots,\mc A^c$.
\end{lemma}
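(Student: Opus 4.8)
The plan is to build the desired strategies $f^l$ by a greedy "stay as close to a golden run as possible, but always obey the restriction" construction, processed level by level from $l=1$ upward, and then to argue that for a word of natural color $c$ the runs on levels $1,\ldots,c$ are accepting. First I would fix the golden strategies $\bar f^{\,l}$ witnessing GFGness of each $\mc A^l$, and fix a word $w$. Working inductively on the level $l$, assume runs $\rho^1,\ldots,\rho^{l-1}$ have already been produced (satisfying $g^1,\ldots,g^{l-1}$, and accepting on those levels $\le c$). For level $l$ I define $f^l$ by: at each step $i$, among the successors allowed by $g^l(q^l_i, x_i, q^1_{i+1},\ldots,q^{l-1}_{i+1})$, pick one whose suffix language equals that of the golden successor $\bar f^{\,l}(x_0\ldots x_i)$ — such a successor exists precisely because $g^l$ was defined to keep, for each suffix-language class present among the $\delta^l$-successors, at least one representative (this is where I use the structure of $g^l$ together with Lemma~\ref{lem:canon-gfg-prop}, which guarantees that rejecting successors come in suffix-language classes and that a representative of each class is reachable).

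The key lemma behind the argument is that switching, at a single step, from one state to another state with the same suffix language costs at most a finite prefix of divergence and does not affect acceptance — this is exactly the "get back on the track" phenomenon announced just before the statement. Concretely, I would show: if $\rho$ and $\rho'$ are two runs of a canonical GFG co-Büchi automaton on the same suffix word, starting from states with the same suffix language, then $\rho$ is accepting iff $\rho'$ is. This follows because the golden strategy from either state produces an accepting run iff the (common) suffix word is in the (common) language, and any two runs on an accepted word that agree on a suffix have the same tail color; Corollary~\ref{cor:lem:canon-gfg-prop} lets me line up the two runs after a bounded prefix. Iterating this over the (at each step bounded, but a priori infinitely many) switch points is the delicate part: I need that the constructed run $\rho^l$ differs from the golden run only by such "same-suffix-language" detours, and that finitely many rejecting transitions are taken on it whenever $w$ is accepted by $\mc A^l$ — i.e.\ whenever $l \le c$. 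For that I note that once two runs reach states with the same suffix language, I can from then on have $f^l$ mimic the golden run's *transitions* (not just successor languages) up to the canonical identification, so only finitely many genuine deviations occur along the whole run; hence the accepting/rejecting status of the tail is inherited from the golden run, which is accepting by GFGness since $w \in \mc L(\mc A^l)$.

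The main obstacle I anticipate is precisely this last point: ensuring that the cumulative effect of the restriction-imposed detours, which may be forced at infinitely many steps in principle, still amounts to only finitely many rejecting transitions on levels $\le c$. The clean way to handle it is to strengthen the induction: instead of "same suffix language at each step", maintain the invariant that after some finite step $i_l$ the run $\rho^l$ coincides (under the canonical naming of states by suffix language) with the golden run of $\mc A^l$ on the corresponding suffix of $w$ — using Lemma~\ref{lem:canon-gfg-prop}'s second clause, which says that from a state with rejecting successors there is a rejecting transition to *every* state of the right suffix language, so the restriction never blocks re-synchronization, and Lemma~\ref{lem:canon-gfg-prop}'s first clause, which says accepting successors are unique and hence never create a choice at all. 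Once synchronized, the run is (a renaming of) the golden run, so it is accepting iff $w \in \mc L(\mc A^l)$, i.e.\ iff $l \le c$ by the definition of natural color; and the restrictions $g^1,\ldots,g^n$ are satisfied by construction at every step. Finally I would remark that the construction is uniform in $w$, so the $f^l$ are genuine functions $\Sigma^* \to Q^l$, completing the proof.
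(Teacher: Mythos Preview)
Your construction has a genuine gap: you are tracking the wrong invariant. You propose to pick, at each step, a successor in $g^l$ whose \emph{suffix language} equals that of the golden successor, and you justify this by saying that $g^l$ keeps a representative of each suffix-language class. But in a canonical GFG co-B\"uchi automaton, \emph{all} successors of a given state on a given letter already share the same suffix language (Lemma~\ref{lem:canon-gfg-prop}: either the successor is unique, or all rejecting successors have the same suffix language), and by Corollary~\ref{cor:lem:canon-gfg-prop} the golden run and your run are in same-suffix-language states at every step anyway. So your selection criterion is vacuous---any element of $g^l$ satisfies it---and your construction degenerates to ``pick an arbitrary allowed successor.'' That is not enough to guarantee acceptance on levels $\le c$; your ``key lemma'' (that switching between same-suffix-language states preserves acceptance) is false as stated, since two runs from the same state on the same word can have different acceptance status.

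The quantity that $g^l$ is actually defined in terms of, and that the paper's proof tracks, is $L^{\acc}(q^1,\ldots,q^l)$: the set of suffixes admitting a run from $(q^1,\ldots,q^l)$ using only accepting transitions. This is strictly finer than the suffix language. The paper's $f^l$ chooses, whenever a choice exists, a successor $q^l_{i}$ with $L^{\acc}(q^1_{i},\ldots,q^{l-1}_{i},q^l_{i}) \supseteq L^{\acc}(q^1_{i},\ldots,q^{l-1}_{i},\tilde q^l_{i})$, where $\tilde q^l_{i}$ is the golden successor; such a choice exists because $g^l$ retains the $L^{\acc}$-maximal successors. Then, once the lower-level runs and the golden level-$l$ run have stabilized (past some moment $m$), at the first rejecting transition $m'\ge m$ of $\rho^l$ the chosen state has the remaining suffix in its $L^{\acc}$, and since accepting transitions are unique (Lemma~\ref{lem:canon-gfg-prop}), $\rho^l$ is thereafter forced onto the accepting-only run. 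Your ``infinitely many detours'' worry dissolves for exactly this reason---but only with the $L^{\acc}$ invariant, not the suffix-language one.
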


\begin{proof}
  Fix a word $w$ of a natural color $c$.
  Each automaton $\mc A^l$ of the chain has a GFG witness in the form of a strategy $h^l : \Sigma^* \to Q^l$ to resolve nondeterminism.
  From such strategies and the restrictions $g^1,\ldots,g^n$,
  we construct the sought strategies $f^1,\ldots,f^n$,
  inductively on the level, starting from the smallest level $1$ and proceeding upwards to $n$.

  Fix $l \in \{1,\ldots,n\}$,
  and suppose the strategies $f^1,\ldots,f^{l-1}$ are already defined;
  we define the strategy $f^l: \Sigma^* \to Q^l$.
  Fix a moment $i-1$.
  Let
  $q^l_{i-1}$ be the state of the run $\rho^l$ proceeding according to $f^l$,
  $\tilde q^l_i=h^l(x_0\ldots x_{i-1})$ the successor state in the original run $\tilde \rho^l$ according to $h^l$,
  $q^1_i,\ldots,q^{l-1}_i$ the successor states in $\rho^1,\ldots,\rho^{l-1}$ adhering to $f^1,\ldots,f^{l-1}$, and
  $Q^l_i = g^l(q^l_{i-1},x_{i-1},q^1_i,\ldots,q^{l-1}_i)$ the allowed successors on level $l$.
  Then:
  \li
  \- if $Q^l_i=\{q^l_i\}$ describes a unique choice, then $f^l(x_0\ldots x_{i-1})=q^l_i$ takes it,
  \- else $f^l$ picks any $q^l_i \in Q^l_i$ s.t.\ $L^\acc(q^1_i,\ldots,q^{l-1}_i,q^l_i) \supseteq L^\acc(q^1_i,\ldots,q^{l-1}_i,\tilde q^l_i)$.
     Note that such $q^l_i$ always exists
     because
     in canonical GFG co-B\"uchi automata a choice of a nondeterministic transition does not narrow the subsequent nondeterminism resolution.
  \il

  We now show that the strategies $f^1,\ldots,f^l$ preserve the natural colors.
  Fix a word $w$.
  It suffices to prove that the original strategy $h^l$ yields an accepting run $\tilde \rho^l$
  if and only if
  $f^l$ yields an accepting run $\rho^l$.
  If $\tilde \rho^l$ is rejecting,
  then $\rho^l$ is also rejecting,
  for $h^l$ is a witness of GFGness.
  Now assume that $\tilde \rho^l$ is accepting.
  After some moment $m$,
  the runs $\rho^1,\ldots,\rho^{l-1},\tilde\rho^l$ never make a rejecting transition,
  hence $w_m w_{m+1} \ldots \in L^\acc(q^1_m,\ldots,q^{l-1}_m,\tilde q^l_m)$.
  Let $m' \geq m$ be the first moment after $m$ when $\rho^l$ visits a rejecting transition;
  if no such $m'$ exists, we are done.
  At moment $m'$, the strategy $f^l$ picks a successor $q^l_{m'+1}$ such that $L^\acc(q^1_{m'+1},\ldots,q^l_{m'+1}) \supseteq L^\acc(q^1_{m'+1},\ldots,\tilde q^l_{m'+1})$.
  Since $w_{m'{+}1}\ldots \in L^\acc(q^1_{m'+1},\ldots,q^{l-1}_{m'+1},\tilde q^l_{m'+1})$,
  that suffix also belongs to a larger $L^\acc$ wrt.\ $q^l_{m'+1}$.
  Hence the run $\rho^l$ is accepting.
  \qed
\end{proof}

\paragraph{Get-back strategies $f^l_\star$.}
\label{def:get-back-on-the-track}
We now consider runs that diverge from golden runs.
Given an individual strategy $f^l : \Sigma^* \to Q^l$,
define $f^l_\star : \Sigma^* \x Q^l \x \Sigma \pto Q^l$ to be a strategy-like function
which, when presented with a choice, makes the same choice as $f^l$.
Formally:
for every $p \in \Sigma^*$, $q \in Q^l$ reachable from the initial state on reading $p$, and $x \in \Sigma$,
the value $f^l_\star(p,q,x) = f^l(p\cdot x)$ if $\mc A^l$ needs to take a rejecting transition from $q$ on $x$,
otherwise there is no choice to be made and $f^l_\star(p,q,x) = q'$ for the unique successor $q'$ of $q$ on reading $x$.
It follows from properties of canonical GFG automata (Lemma~\ref{lem:canon-gfg-prop})
that every successor chosen by $f^l_\star$ satisfies the transition relation of $\mc A^l$.
We now prove that it is sufficent to adhere to $f^l_\star$ only eventually.

\begin{lemma}\label{lem:cocoa-runs}
  Fix a COCOA and a word $w$.
  For $l\in \{1,\ldots,n\}$,
  suppose $\mc A^l$ on $w$ has a rejecting run $\rho^l$
  that eventually adheres to $f^l_\star$,
  where $f^l_\star$ is constructed from $f^l$ of Lemma~\ref{lem:better-gfg-strategies}.
  Then $\mc A^l$ rejects $w$.
\end{lemma}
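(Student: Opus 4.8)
The plan is to argue by contradiction: suppose $\mc A^l$ accepts $w$, yet it has a rejecting run $\rho^l$ on $w$ that eventually adheres to $f^l_\star$. Since $\mc A^l$ accepts $w$ and, by Lemma~\ref{lem:better-gfg-strategies}, the strategy $f^l$ yields an accepting golden run on $w$, I will use this golden run as a reference point. The key intuition is that a rejecting run which eventually plays like $f^l_\star$ is, from the moment it starts adhering to $f^l_\star$ onward, essentially a ``shifted'' copy of a golden run on a suffix of $w$; so its rejection would contradict the fact that golden runs accept whenever the language membership holds.

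More concretely, let $m$ be the moment after which $\rho^l$ adheres to $f^l_\star$, and let $p = x_0\ldots x_{m-1}$, $q = \rho^l_m$ be the state reached by $\rho^l$ at that moment. First I would observe that $q$ is some state of $\mc A^l$ reachable on reading some prefix (not necessarily $p$, since $\rho^l$ may have diverged before $m$, but $q$ is still a genuine reachable state). By Corollary~\ref{cor:lem:canon-gfg-prop}, all states reached after reading a given prefix share the same language; more importantly, I want to compare $\mc L(\mc A^l, q)$ with $\mc L(\mc A^l, f^l(p))$, the state the golden run is in after $p$. The essential point is that from moment $m$ on, $\rho^l$ makes exactly the choices $f^l$ would make (that is what $f^l_\star$ encodes: on a forced transition there is no choice, on a branching point it copies $f^l(p\cdot x)$). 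So the tail of $\rho^l$ from $m$ is the run that $f^l$ would produce if started from state $q$ on the word $x_m x_{m+1}\ldots$ with history $p$. Since $f^l$ is a golden (GFG) strategy, the run it produces on a word is accepting whenever that word is in the language; hence if $x_m x_{m+1}\ldots \in \mc L(\mc A^l, q)$, the tail of $\rho^l$ would be accepting, contradicting that $\rho^l$ is rejecting. So it must be that $x_m x_{m+1}\ldots \notin \mc L(\mc A^l, q)$.

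It then remains to derive a contradiction with $w \in \mc L(\mc A^l)$. For this I would use that $\mc A^l$ is good-for-games together with the properties of canonical GFG co-Büchi automata (Lemma~\ref{lem:canon-gfg-prop}): whenever $\mc A^l$ accepts $w$, \emph{every} run that has been produced by resolving nondeterminism ``correctly'' so far can still be extended to an accepting run, because a choice of a nondeterministic transition never narrows the set of words still acceptable from the reached state (this is precisely the ``all successors of $q$ on $x$ share the same suffix language'' clause, and its use in Lemma~\ref{lem:better-gfg-strategies}). Concretely, the run $\rho^l$ restricted to $[0,m]$ reaches $q$, and by the canonicity property each prefix-consistent run keeps $\mc L(\mc A^l, \cdot)$ equal to the true suffix language $\mc L(w, x_0\ldots x_{m-1})$ — wait, I must be careful here, since before $m$ the run $\rho^l$ is arbitrary and could have made ``bad'' rejecting choices that shrink the reachable language. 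So the cleaner route is: the hypothesis is that $\rho^l$ \emph{eventually} adheres to $f^l_\star$; I will push $m$ forward to the first moment after which $\rho^l$ makes no further rejecting transition at all is \emph{not} guaranteed (it is rejecting, so it makes infinitely many), so instead I take $m$ to be any moment past which $f^l_\star$ is followed, and argue directly that from $q$ the golden-style continuation rejects, hence $x_m\ldots \notin \mc L(\mc A^l,q)$; but then, since in a canonical GFG co-Büchi automaton every state reachable by \emph{any} valid run after reading a prefix $x_0\ldots x_{m-1}$ has suffix language \emph{containing} the true suffix language $\mc L(w, \varepsilon)$-suffix (choices never narrow it — Lemma~\ref{lem:canon-gfg-prop}), we get $x_m x_{m+1}\ldots \in \mc L(\mc A^l, q)$, a contradiction.

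The main obstacle, which I would handle carefully, is exactly this bookkeeping about what $q = \rho^l_m$ can be: $\rho^l$ before time $m$ is an \emph{arbitrary} rejecting run, not one following any golden or restricted strategy, so a priori its choices might have shrunk the suffix language reachable from the current state below $\mc L(w,x_0\ldots x_{m-1})$. The resolution is the canonicity property of Lemma~\ref{lem:canon-gfg-prop}: in the canonical GFG co-Büchi automaton, whenever there is a choice (i.e.\ rejecting transitions), \emph{all} successors on that letter — hence all states any run can move to — carry the same suffix language, and moreover that language is the true suffix language of $w$ at that point by Corollary~\ref{cor:lem:canon-gfg-prop}. So $\mc L(\mc A^l, q)$ equals the true suffix language $\mc L(\mc L(\mc A^l), x_0\ldots x_{m-1})$, which contains $x_m x_{m+1}\ldots$ precisely because $w \in \mc L(\mc A^l)$. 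Combined with the previous paragraph this is the contradiction, completing the proof.
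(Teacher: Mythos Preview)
Your argument has a real gap at the step where you conclude that the tail of $\rho^l$ from moment $m$ must be accepting whenever $x_m x_{m+1}\ldots \in \mc L(\mc A^l, q)$. You justify this by ``since $f^l$ is a golden (GFG) strategy, the run it produces on a word is accepting whenever that word is in the language,'' but the GFG property of $f^l$ speaks only about the run starting from the \emph{initial} state of $\mc A^l$. It says nothing about a run launched from an arbitrary reachable state $q$ that merely uses the same choice function $f^l_\star$ with history $p$. That inheritance of the acceptance guarantee is exactly what needs to be argued, and you never argue it; you only establish (correctly, via Corollary~\ref{cor:lem:canon-gfg-prop}) that the suffix lies in $\mc L(\mc A^l,q)$, which is a different statement.

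The missing observation --- which is in fact the whole content of the paper's proof --- is a \emph{merging} argument. You already note that ``on a branching point $f^l_\star$ copies $f^l(p\cdot x)$''; the point you do not exploit is that $f^l(p\cdot x)$ is precisely the state the golden run $\rho^l_\star$ occupies at that moment. Since $\rho^l$ is rejecting, it takes some rejecting transition at a first moment $n\geq m$; Lemma~\ref{lem:canon-gfg-prop} guarantees that jumping to $f^l(x_0\ldots x_n)$ is a legal transition from the current state, so at moment $n{+}1$ the runs $\rho^l$ and $\rho^l_\star$ are in the same state. From then on they stay in lockstep: deterministic transitions are unique, and at every subsequent choice $f^l_\star$ again returns the golden run's state. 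Hence $\rho^l$ rejecting forces $\rho^l_\star$ rejecting, and since $f^l$ is a GFG witness, $\mc A^l$ rejects $w$. Once you have this merging, the detour through suffix languages and Corollary~\ref{cor:lem:canon-gfg-prop} becomes unnecessary.
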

The proof is based on Lemma~\ref{lem:canon-gfg-prop},
which implies that two diverging runs of a canonical GFG automaton on the same word
can always be converged once a rejecting transition is taken.

\begin{proof}
  For $l=0$ the claim trivially holds; assume $l>0$.
  Let $\rho_\star^l$ be the golden run of $\mc A^l$ on the word.
  Let $m$ be the moment starting from which $\rho^l$ adheres to the golden strategy of $\mc A^l$.
  Let $n$ be the first moment $n \geq m$
  when $\mc A^l$ makes a rejecting transition:
  by properties of canonical GFG automata (Lemma~\ref{lem:canon-gfg-prop}),
  there must be a rejecting transition to the same state as in $\rho_\star^l$.
  The strategy $f_\star^l$ moves the automaton $\mc A^l$ in $\rho^l$ into the same state at moment $n+1$ as it is in $\rho_\star^l$.
  Afterwards,
  the strategy $f_\star^l$ ensures that $\mc A^l$ in $\rho^l$ follows exactly the same transitions as $\mc A^l$ in $\rho_\star^l$.
  Hence,
  the golden run $\rho_\star^l$ is rejecting:
  $\mc A^l$ rejects $w$. \qed
\end{proof}

\normalsize
\subsection*{COCOA product}
In this section,
we compose individual automata of COCOA into a product
which is a good-for-games alternating parity automaton~\cite{AlternatingGFG}.
The results 
above
imply that the languages of a COCOA and its product coincide.
Later we use COCOA products to solve games with LTL objectives.

\paragraph{Alternating automata.}
A \emph{simple%
\footnote{`Simple' refers to a simpler form of the transition function. We use $\delta : Q \x \Sigma \to 2^Q \x \bbN \x \{\rej,\acc\}$ while the general form is $\delta : Q \x \Sigma \to {\sf B}^+(Q)$ plus parity assignment $Q\x\Sigma\x Q \to \bbN$. We forbid mixing conjunctions and disjunctions.}%
alternating parity automaton}
$(\Sigma,Q,q_0,\delta)$
has a transition function of type $\delta : Q \x \Sigma \to 2^Q \x \bbN \x \{\rej,\acc\}$.
For instance,
$\delta(q,x) = (\{q_1,q_2\},1,\rej)$
means that
from state $q$ on reading letter $x$
there are transitions to $q_1$ and $q_2$,
both labelled with color $1$,
and the choice between $q_1$ and $q_2$ is controlled by the rejector player.
There are two players, rejector and acceptor,
and the acceptance of a word $w = x_0 x_1 \ldots$ is defined via the following \emph{word-checking game}.
Starting in $q_0$,
the two players resolve nondeterminism and build a \emph{play}
$(q_0,c_0,pl_0,q_1) (q_1,c_1,pl_1,q_2)\ldots$:
suppose the play sequence is in state $q_i$,
let $\delta(q_i,x_i) = (Q_{i+1},c_i,pl_i)$:
if $pl_i = \rej$ then the rejector chooses a state $q_{i+1} \in Q_{i+1}$, otherwise the acceptor chooses.
The play sequence is then extended by $(q_i,c_i,pl_i,q_{i+1})$ and the procedure repeats from state $q_{i+1}$.
The play is \emph{won} by the acceptor if the minimal color appearing infinitely often in $c_0 c_1\ldots$ is even
(min-even acceptance),
otherwise it is won by the rejector.
The word-checking game is \emph{won} by the acceptor
if
it has a strategy $f_w : Q^* \to Q$ to resolve its nondeterminism to win every play;
otherwise the game is won by the rejector, who then also has a winning strategy.
Note that although the acceptor strategy does not know the rejector choices beforehand,
it knows the whole word $w$.
The word is \emph{accepted} by the automaton if the word-checking game is won by the acceptor.

A simple alternating automaton is \emph{good-for-games}, abbreviated \emph{A-GFG},
if the acceptor player has a strategy
$f_\text{acc} : (Q \x \Sigma)^* \to Q$
to win the word-checking game for every accepting word, and
the rejector player has a strategy
$f_\text{rej} : (Q\x\Sigma)^* \to Q$
winning for every rejected word.
These strategies depend only on the currently seen word prefix, not the whole word.
We remark that our definition of GFGness differs from~\cite{AlternatingGFG}
but they show the equivalence~\cite[Thm.8]{AlternatingGFG}.

\paragraph{COCOA product.}
The product is built in three steps.
First, we define a naive product,
which combines individual chain automata into A-GFG in a straightforward way.
The naive product may contain states whose removal does not affect its language,
hence in the second step we define a product with reduced sets of states and transitions.
In turn, the reduced product may miss transitions beneficial for synthesis.
Therefore,
in the last step, we enrich the reduced product with transitions to derive the optimized, and final, COCOA product.

Given a COCOA $\mc A^l = (\Sigma,Q^l,q_0^l,\delta^l)$ with $l\in \{1,\ldots,n\}$, the
\emph{naive COCOA product} is
the following simple alternating parity automaton $(\Sigma,Q,q_0,\delta)$.
Each state is a tuple from $Q^1\x \ldots \x Q^n$,
$q_0 = (q^1_0,\ldots,q^n_0)$,
and the set of states consists of those reachable from the initial state under the transition relation defined next.
The transition relation $\delta : Q \x \Sigma \to 2^Q\x\bbN\x\{\rej,\acc\}$
simulates individual automata of the COCOA.
Consider an arbitrary $(q^1,\ldots,q^n)\in Q$, $x \in \Sigma$;
let $r$ be the smallest number such that
$\mc A^r$ has a rejecting transition from $q^r$ on reading $x$,
i.e., $(q^r,x,\tilde q^r,1) \in \delta^r$ for some $\tilde q^r \in Q^r$,
otherwise set $r$ to $n+1$.
By abuse of notation,
define $\delta^l(q^l,x) = \{\tilde q^l \mid \exists p: (q^l,x,\tilde q^l,p) \in \delta^l\}$
to be the set of successor states of $q^l$ on reading $x$ in $\mc A^l$.
Let $pl^r$ be $\rej$ for odd $r$ and $\acc$ for even $r$.
Then,
$\delta((q^1,\ldots,q^n),x) = (\tilde Q ,r-1, pl^r)$,
where:
\begin{equation*}
  \tilde Q =
  \{
    (\tilde q^1,\ldots,\tilde q^n) \mid
    \tilde q^l \in \delta^l(q^l,x) \textit{ for every $l$}
  \}.
\end{equation*}
Notice that the automata on levels $l < r$
have unique successors ($\tilde q^l$ is unique) as their transitions are accepting and hence deterministic (by Lemma~\ref{lem:canon-gfg-prop} on page~\pageref{lem:canon-gfg-prop}).
The automata on levels $l\geq r$ may need to resolve nondeterminism,
which is done by a single player $pl^r$ in the product.

The \emph{reduced COCOA product} $(\Sigma, Q_R, q_0,\delta_R)$ is defined by replacing the definition of $\tilde Q$ by
\begin{equation*}
  \tilde Q = \{ (\tilde q^1,\ldots,\tilde q^n) \mid \tilde q^l \in g^l(\tilde q^1,\ldots,\tilde q^{l-1},x,q^l) \textit{ for every $l$}\}
\end{equation*}
where the restriction function $g^l$ was defined on page~\pageref{def:restriction-function}.
As a result,
this set $\tilde Q$ has no two states $(q^1,\ldots,q^n)$ and $(\tilde q^1,\ldots,\tilde q^n)$
with $L^\acc(q^1,\ldots,q^n) \subseteq L^\acc(\tilde q^1,\ldots,\allowbreak{}\tilde q^n)$.
The set $Q_R$ of states of the reduced COCOA product is the set of states from $Q^1\x\ldots\x Q^n$ reachable under the above definition.

Finally, given a reduced COCOA product $(\Sigma,Q_R,q_0,\delta_R)$,
we now define the \emph{optimized COCOA product} $(\Sigma,Q_O,q_0,\delta_O)$.
It has the same states $Q_O=Q_R$ as the reduced product but more transitions.
For $(q^1,\ldots,q^n) \in Q_O$, $x \in \Sigma$,
let $(\tilde Q_R,r-1,pl^r) = \delta_R((q^1,\ldots,q^n),x)$.
Then $\delta_O((q^1,\ldots,q^n),x) = (\tilde Q_O,r-1,pl^r)$,
where
\begin{align*}
  \tilde Q_O =
    \big\{
      &(\tilde q^1,\ldots,\tilde q^n) \in Q_O : \\
      &\forall l \in \{1,\ldots,r-1\}{:}~ q^l \in \delta^l(q^l,x) ~\land \\
      &\forall l \in \{r,\ldots,n\}.\exists \tilde{\tilde q}^l \in \delta^l(q^l,x){:}~ L(\tilde q^l) = L(\tilde{\tilde q}^l)
    \big\}.
\end{align*}
In the first condition, the successor $q^l$ for $l\leq r{-}1$ is uniquely defined.
The second condition on levels higher than $r-1$ allows for state jumping,
even when an automaton on such a level makes a deterministic transition.

\begin{lemma}\label{lem:cocoa-product}
  For every COCOA,
  the optimized product is A-GFG and has the same language as the COCOA.
\end{lemma}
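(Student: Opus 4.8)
The plan is to prove the two claims --- language preservation and A-GFGness --- by a chain of equalities/inclusions between three products, using the lemmas about golden strategies and get-back strategies already established. I would first record the easy direction, that the naive product has the same language as the COCOA: in the word-checking game on the naive product, a play is exactly a choice of runs $\rho^1,\dots,\rho^n$ of the chain automata on $w$, and by construction the color emitted at a step is $r-1$ where $r$ is the smallest level taking a rejecting transition at that step, controlled by the player $pl^r$ (rejector for odd $r$, acceptor for even $r$). Hence the minimal color seen infinitely often is $c-1$ where $c$ is the largest level that is accepted along the chosen runs, and the acceptor can force this to be even exactly when it can force the natural color to be even, i.e.\ exactly when $w \in L$. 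So the naive product accepts $w$ iff $w \in L$, and moreover the acceptor's witnessing strategy is to play the golden strategies $f^1,\dots,f^n$ of Lemma~\ref{lem:better-gfg-strategies} on levels $\geq r$, while the rejector's witnessing strategy on a rejected word is to steer the lowest odd-indexed rejecting level into a rejecting run using the same golden strategies --- and since the $f^l$ depend only on the prefix seen so far, both are prefix-based, giving A-GFGness of the naive product.

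Next I would pass from the naive product to the reduced product. Here the only change is that the successor set $\tilde Q$ is thinned by the restriction functions $g^l$. The point of Lemma~\ref{lem:better-gfg-strategies} is precisely that the golden strategies can be taken to satisfy the restrictions $g^1,\dots,g^n$ while still producing accepting runs up to the natural color; so the acceptor's winning strategy from the naive product still lives inside the reduced product. Conversely, restricting successors can only help the rejector (fewer acceptor choices) and cannot let the acceptor win a word not in $L$, because any run in the reduced product is still a legal run of the individual automata, so the ``only if'' direction of language containment is automatic. Thus the reduced product has the same language and remains A-GFG, with the same golden-strategy witnesses.

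Finally, the optimized product has the same state set as the reduced product but adds ``state-jumping'' transitions: on levels $\geq r$ it allows jumping to any state of $Q_O$ whose suffix language equals that of some genuine successor. The language is unchanged because: (a) every reduced-product transition is still present, so $L(\text{reduced}) \subseteq L(\text{optimized})$ at the level of acceptor strategies, giving one inclusion; and (b) for the other inclusion I would invoke Lemma~\ref{lem:cocoa-runs} and the get-back strategies $f^l_\star$ --- a run in the optimized product, read level by level, is a run of each $\mc A^l$ that may diverge from the golden run for finitely many steps at each jump but, by Lemma~\ref{lem:canon-gfg-prop} (all $x$-successors of $q$ share a suffix language, and every state with that language is reachable by a rejecting transition), can be made to get back onto the golden run the next time a rejecting transition is taken; Lemma~\ref{lem:cocoa-runs} then says such a run rejects iff the golden run rejects, so the natural color is unchanged and the acceptor wins the optimized word-checking game iff $w \in L$. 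A-GFGness is inherited: the acceptor plays the golden strategies (which are legal in the optimized product since it has more transitions), the rejector plays $f^l_\star$ to get back on track and then follow the golden run, and both remain prefix-based.

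**Main obstacle.** The routine part is the naive product; the delicate part is the optimized product, specifically verifying that the added state-jumps never allow the acceptor to win on a word outside $L$. The subtlety is that a jump happens on levels $\geq r$, and on levels strictly between $r$ and the natural color $c$ these are deterministic transitions being replaced by jumps to language-equivalent states --- one must check that Lemma~\ref{lem:cocoa-runs}, stated for a single automaton $\mc A^l$ with its own $f^l_\star$, can be applied simultaneously and consistently across all levels of a single product run, using Corollary~\ref{cor:lem:canon-gfg-prop} to guarantee that the lower-level states the get-back strategy is conditioning on have the expected suffix languages. Threading these per-level get-back arguments together into one product-level statement, while keeping the rejector and acceptor strategies prefix-based, is where the real work lies.
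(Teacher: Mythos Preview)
Your three-step chain (naive $\to$ reduced $\to$ optimized) is a different decomposition from the paper, which argues directly on the optimized product. The paper simply writes down the two strategies: $f_\text{acc}$ plays $f^l_\star$ on even levels and an arbitrary $g^l$-admissible successor on odd levels, $f_\text{rej}$ does the symmetric thing; then, by contraposition, if the acceptor equipped with $f_\text{acc}$ loses, the losing play exhibits accepting runs $\rho^1,\ldots,\rho^l$ and a rejecting $\rho^{l+1}$ for some odd $l$, and since from the moment levels $1,\ldots,l$ stop rejecting the run $\rho^{l+1}$ is forced to follow $f^{l+1}_\star$, Lemma~\ref{lem:cocoa-runs} gives that the word has odd natural color. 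There is no intermediate language-equivalence claim about the naive or reduced products.

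Your detour has a genuine gap at the reduced-product step. The sentence ``restricting successors can only help the rejector (fewer acceptor choices)'' is false: the restriction $g^l$ is applied to $\tilde Q$ regardless of which player controls the move, so the \emph{rejector's} choices are pruned just as much as the acceptor's. Concretely, on a word $w\notin L$ the rejector's witness in the naive product may want to move level $l$ to some $\delta^l$-successor that is dominated in the $L^{acc}$ order relative to the \emph{actual} lower-level successors $r^1,\ldots,r^{l-1}$ in the current play---and those lower-level states need not be the golden ones, because the acceptor may have diverted them on earlier even-$r$ steps. So Lemma~\ref{lem:better-gfg-strategies} alone does not guarantee that the rejector's naive-product witness survives the restriction, and ``any run in the reduced product is still a legal run of the individual automata'' does not rescue you: it only shows that plays of the reduced game are plays of the naive game, not that the rejector can still win. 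To close this you would need exactly the get-back machinery ($f^l_\star$ plus Lemma~\ref{lem:cocoa-runs}) that you defer to the optimized step, at which point the detour buys nothing over the paper's direct argument. A smaller point: your formula ``the minimal color seen infinitely often is $c-1$ where $c$ is the largest level that is accepted along the chosen runs'' should read ``the largest $k$ such that $\rho^1,\ldots,\rho^k$ are \emph{all} accepting''; a higher-level run can be accepting while an intermediate one is not.
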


\begin{proof}
  We describe two strategies,
  $f_\text{acc} : (Q\x\Sigma)^* \to Q$ for the acceptor and $f_\text{rej} : (Q\x\Sigma)^* \to Q$ for the rejector,
  and prove two claims: for every word,
  \lo
  \- if the word is accepted by COCOA, the acceptor wins the word-checking game using $f_\text{acc}$,
  \- if the word is rejected by COCOA, the rejector wins the word-checking game using $f_\text{rej}$.
  \ol
  The lemma follows from these claims.

  We define $f_\text{acc}$.
  Given a finite history $h = ((q^1_1,...,q^n_1), x_1) ... ((q^1_i,...,q^n_i),x_i)$,
  let $f_\text{acc}(h) = (q^1_{i+1},...,q^n_{i+1})$,
  where for $l=1,...,n$:
  \li
  \- if $l$ is even: $q^l_{i+1} = f_\star^l(x_1 \ldots x_{i-1},q^l_i,x_i)$;
  \- if $l$ is odd, pick arbitrary $q^l_{i+1} \in g^l(q^1_{i+1},\ldots,q^{l-1}_{i+1},q^l_i)$.
  \il
  The strategy $f_\text{rej}$ is built similarly but $f^l_\star$ is used for odd $l$.

  We now prove the first item using contraposition.
  Suppose for a given word,
  the acceptor equipped with $f_\text{acc}$ loses the word-checking game.
  Then there exists a play consisent with $f_\text{acc}$ such that
  for some odd $l$
  it encodes accepting runs $\rho^1,\ldots,\rho^l$ of $\mc A^1,\ldots,\mc A^l$ and a rejecting run $\rho^{l+1}$ of $\mc A^{l+1}$.
  After some moment $m$,
  the first $l$ runs never visit a rejecting transition,
  and hence $\rho^{l+1}$ starting with the moment $m$ always follows the strategy $f_\star^{l+1}$.
  Lemma~\ref{lem:cocoa-runs} implies that that the word has the odd natural color $l$,
  and therefore is rejected by the COCOA.
  This concludes the proof of the first item, the second item is proven similarly.
  \qed
\end{proof}

\paragraph{Example.}
\begin{figure}[tb]
\centering
\begin{tikzpicture}
\node[thick,state,inner sep=0pt,minimum size=0.7cm,fill=black!10!white] (qA) at (-6,-2) {$q_0$};
\node[thick,state,inner sep=0pt,minimum size=0.7cm,fill=black!10!white] (qB) at (-4,-2) {$q_1$};

\draw[->,thick] (qA) to[loop left] node[left] {$\bar x$} (qA);
\draw[->,thick] (qB) to[loop right] node[right] {$\bar y$} (qB);

\draw[->,thick,dashed,color=red!50!white] (qA) to[loop below, looseness=7] node[below] {$x$} (qA);
\draw[->,thick,dashed,color=red!50!white] (qB) to[loop below, looseness=7] node[below] {$y$} (qB);

\node at (-5,-2.9) {$\mathcal{A}^1$};
  \node at (-5,-3.4) {$\LTLFG \bar x \lor \LTLFG \bar y$};

\draw[->,thick,dashed,color=red!50!white] (qA) to[bend left=10] node[above] {$x$} (qB);
\draw[->,thick,dashed,color=red!50!white] (qB) to[bend left=10] node[below] {$y$} (qA);

\draw[->,thick] (-6.7,-1.6) -- (qA);
\draw[fill=black] (-6.7,-1.6) circle (0.05cm);
\end{tikzpicture}
$\quad \quad$
\begin{tikzpicture}
\node[thick,state,inner sep=0pt,minimum size=0.7cm,fill=black!10!white] (qA) at (-6,-2) {$p_0$};
\node[thick,state,inner sep=0pt,minimum size=0.7cm,fill=black!10!white] (qB) at (-4,-2) {$p_1$};

\draw[->,thick] (qA) to[loop left] node[left] {$\bar x  \bar u$} (qA);
\draw[->,thick] (qB) to[loop right] node[right] {$\bar y \bar u$} (qB);

\draw[->,thick,dashed,color=red!50!white] (qA) to[loop below, looseness=7] node[below] {$x \vee u$} (qA);
\draw[->,thick,dashed,color=red!50!white] (qB) to[loop below, looseness=7] node[below] {$y \vee u$} (qB);

\node at (-5,-2.9) {$\mathcal{A}^2$};
  \node at (-5,-3.4) {$\LTLFG \bar x \bar u \lor \LTLFG \bar y \bar u$};

\draw[->,thick,dashed,color=red!50!white] (qA) to[bend left=10] node[above] {$x \vee u$} (qB);
\draw[->,thick,dashed,color=red!50!white] (qB) to[bend left=10] node[below] {$y \vee u$} (qA);

\draw[->,thick] (-6.7,-1.6) -- (qA);
\draw[fill=black] (-6.7,-1.6) circle (0.05cm);
\end{tikzpicture}
\vspace{-2mm}
\caption{COCOA for the language $\LTLGF u \to (\LTLGF x \land \LTLGF y)$. Rejecting transitions are dashed.}
\label{fig:gr1Example}
\end{figure}
Figure~\ref{fig:gr1Example-cocoaProduct} shows the optimized product for COCOA in Figure~\ref{fig:gr1Example} for $\LTLGF u \to (\LTLGF x \land \LTLGF y)$.
We now intuitively explain which states can be removed without affecting the product language.
The COCOA encodes a falling chain of languages,
hence the strongly connected components of $\mc A^2$ cannot accept any word that is rejected by $\mc A^1$.
Thus, every strongly connected component of $\mc A^2$ consisting only of accepting transitions,
which monitors the liveness assumption,
also needs to monitor one of the liveness guarantees.
This means that when building the product between $\mc A^1$ and $\mc A^2$,
we obtain unnecessary product elements.
Consider an arbitrary suffix word on which
the automata $\mc A^1$ and $\mc A^2$ when starting from $q_0$ and $p_1$ visit only accepting transitions:
when starting from $q_0$ and $p_0$, we also visit only accepting transitions on that suffix word.
Since the acceptor player tries to visit only accepting transitions,
there is never an incentive for the acceptor to choose $p_1$ for resolving the nondeterminism
when $\mc A^1$ is in state $q_0$.
This makes the choice of $p_0$ better than that of $p_1$ when $\mc A^1$ stays in $q_0$.
Similarly, transiting from $(q_1,p_1)$ to $(q_1,p_1)$ is better for the acceptor than transiting to $(q_1,p_0)$.
Thus, the combinations $(q_0,p_1)$ and $(q_1,p_0)$ are unnecessary.
\begin{figure}[tb]
  \centering
  \begin{tikzpicture}
    \node[thick,state,inner sep=1.5mm,minimum size=5mm,fill=black!10!white,rectangle,rounded corners] (00) at (-6,-2) {$q_0,p_0$};
    \node[thick,state,inner sep=1.5mm,minimum size=5mm,fill=black!10!white,rectangle,rounded corners] (11) at (-3,-2) {$q_1,p_1$};
    \node[circle, inner sep=0pt, minimum size=0pt] (00to11) at (-4.5,-1.855) {};
    \node[circle, inner sep=0pt, minimum size=0pt] (11to00) at (-4.5,-2.145) {};

    \draw[->,thick] (00) to[loop left,looseness=5] node[left] {\specialcellC{$\bar x \bar u{:}2$\vspace{-1.3mm}\\{\scalebox{0.8}{\tt rej}}}} (00);
    \draw[->,thick] (00) to[loop below] node[below] {\specialcellC{$\bar x u{:}1$\vspace{-1.3mm}\\{\scalebox{0.8}{\tt acc}}}} (00);
    \draw[-,thick] (00.10) to[bend right=10]  (00to11);
    \draw[->,thick] (00to11) to[bend left=30] node[above,xshift=-2.3mm,yshift=-1.1mm] {\specialcellC{$x{:}0$\vspace{-1.3mm}\\{\scalebox{0.8}{\tt rej}}}} (11.165);
    \draw[->,thick] (00to11) to[bend right=90] (00.45);

    \draw[->,thick] (11) to[loop right,looseness=5] node[right] {\specialcellC{$\bar y \bar u{:}2$\vspace{-1.3mm}\\\scalebox{0.8}{\tt rej}}} (11);
    \draw[->,thick] (11) to[loop below] node[below] {\specialcellC{$\bar y u{:}1$\vspace{-1.3mm}\\\scalebox{0.8}{\tt acc}}} (11);
    \draw[-,thick] (11.190) to[bend right=10]  (11to00);
    \draw[->,thick] (11to00) to[bend left=30] node[below,xshift=2.3mm,yshift=1.1mm] {\specialcellC{$y{:}0$\vspace{-1.3mm}\\{\scalebox{0.8}{\tt rej}}}} (00.345);
    \draw[->,thick] (11to00) to[bend right=90] (11.225);

    \draw[->,thick] (-6.7,-1.6) -- (00);
    \draw[fill=black] (-6.7,-1.6) circle (0.05cm);
  \end{tikzpicture}
  \vspace{-2mm}
  \caption{Optimized COCOA product for $\LTLGF u \to (\LTLGF x \land \LTLGF y)$.
           It has only two nondeterministic transitions, connecting $(q_0,p_0)$ and $(q_1,p_1)$,
           controlled by the rejector.
	   For instance, $\delta((q_0, p_0),x) = (\{(q_0,p_0),(q_1,p_1)\},0,\rej)$.}
  \label{fig:gr1Example-cocoaProduct}
\end{figure}
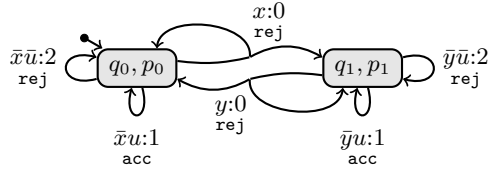

\section{Solving LTL Games Using Chain of co-Büchi Automata}
\label{sec:synthesisFromCOCOA}
This section shows how to solve symbolic games with LTL objectives by going through COCOA.
For a given LTL specification we construct a deterministic parity automaton
and then a COCOA using the effective procedure of~\cite{DBLP:conf/fsttcs/EhlersS22}.
We then compute the COCOA product.
Finally, we encode the symbolic game with a COCOA product objective into a fixpoint formula.
The latter step is simple because the COCOA product is a good-for-games alternating automaton,
and such automata are composable with games~\cite[Thm.8]{AlternatingGFG}.
Finally, we show that the GR(1) fixpoint equation is a special case of the COCOA fixpoint formula.

\subsection*{Fixpoint formula for games with COCOA objectives}
Given a game with an objective in the form of an optimized COCOA product
$(2^\AP,Q,q_0,\delta)$,
we construct a fixpoint formula that characterizes the set of winning positions.
Since the COCOA product is a good-for-games parity automaton,
the formula resembles Equation~\ref{eqn:fixpoint-parity}.
It has the structure $\nu X^0.\mu X^1.\ldots\sigma X^n$
where $n+1$ is the number of colors in the COCOA product,
and the operators $\nu$ and $\mu$ alternate.
As before, we use the vector notation,
and split each variable $X^l$ into $|Q|$ variables $X_1^l,\ldots,X_{|Q|}^l$,
one per state of the COCOA product,
and the $k$th row in the fixpoint formula encodes transitions from state $q_k$ of the product.
Let $ind: Q \to \{1,\ldots,|Q|\}$ be some one-to-one state numbering
with the initial state of the COCOA product mapped to $1$,
and let $\raisebox{-0.3mm}{\scalebox{1.15}{\sf OP}}^{pl}$ denote $\bigvee$ when $pl$ is $\acc$
otherwise it is $\bigwedge$.
The following fixpoint formula computes,
for each state $q$ of the COCOA product,
the set $W_{ind(q)}$ of game positions
from which the system player wins the game wrt.\ the COCOA product whose initial state is set to $q$:
\begin{align}\label{eqn:fixpoint-COCOA}
  \left[ \begin{matrix} W_1 \\ \smallvdots \\ W_{|Q|} \end{matrix}\right]
  &=
  \nu \left[ \begin{matrix} X_1^0 \\ \smallvdots \\ X_{|Q|}^0 \end{matrix} \right]\!\!.
  \mu \left[ \begin{matrix} X_1^1 \\ \smallvdots \\ X_{|Q|}^1 \end{matrix} \right] \ldots~
  \sigma \left[ \begin{matrix} X_1^n \\ \smallvdots \\ X_{|Q|}^n \end{matrix} \right]\!\!.
  \LTLsquare \LTLdiamond\!
  \left[
  \begin{matrix}
    \psi_1 \\
    \smallvdots \\
    \psi_{|Q|}
  \end{matrix}
  \right]\!,\, \text{ where for all $j$:}\\\nonumber
  \psi_j &= \hspace{-6mm}\bigvee_{\scalebox{0.8}{\specialcellC{$x \in 2^\AP$\vspace{-0.5mm}\\$\text{let } (\tilde Q,c,pl)=\delta(q_j,x)$}}} \hspace{-6mm}
  \left(
  x \land
  \raisebox{-0.35mm}{\scalebox{1.2}{\sf OP}}^{pl}_{\!\!q \in \tilde Q} X_{ind(q)}^c
  \right)
\end{align}
The game wrt.\ the COCOA product is won by the system player if and only if $v_0 \in W_1$.
Since the languages of COCOA and its optimized product coincide (Lemma~\ref{lem:cocoa-product}),
we arrive at the following theorem.

\begin{theorem}
  A game with an LTL objective $\Phi$ is won by the system if and only if the initial game position belongs
  to $W_1$ computed by Equation~\ref{eqn:fixpoint-COCOA} for the optimized COCOA product for $\Phi$.
\end{theorem}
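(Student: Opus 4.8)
The plan is to assemble the theorem from three facts that are already in place, so that essentially no new technical argument is needed. Recall first that the object fed into Equation~\ref{eqn:fixpoint-COCOA} is built as follows: $\Phi$ is translated into a deterministic parity automaton, a COCOA is extracted with the procedure of~\cite{DBLP:conf/fsttcs/EhlersS22}, and the optimized COCOA product $\mc A = (2^\AP,Q,q_0,\delta)$ is formed. By the definition of a chain-of-co-Büchi representation the COCOA recognizes $\mc L(\Phi)$, and by Lemma~\ref{lem:cocoa-product} the optimized product $\mc A$ is an A-GFG alternating parity automaton with $\mc L(\mc A)=\mc L(\Phi)$. Hence it suffices to show that the system wins the game $G$ with objective $\mc A$ if and only if $v_0\in W_1$, where $W_1$ is the first component of the solution of Equation~\ref{eqn:fixpoint-COCOA}.

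Second, I would invoke composability of A-GFG automata with games. Because $\mc A$ is A-GFG, both the acceptor and the rejector possess prefix-based winning strategies in the word-checking game, and by~\cite[Thm.8]{AlternatingGFG} (after matching our notion of GFGness to theirs, also done there) the winner of $G$ with objective $\mc L(\mc A)$ coincides with the winner of the product parity game $G\otimes\mc A$: a parity game whose positions are built from $V\times Q$, where at each step the environment picks an input and the system picks an output exactly as in $G$, the product transition $\delta((q^1,\ldots,q^n),x)=(\tilde Q,c,pl)$ contributes color $c$, and the successor within $\tilde Q$ is selected by the system if $pl=\acc$ and by the environment if $pl=\rej$ --- to make this literally a parity game one inserts an auxiliary choice vertex owned by the corresponding player. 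The theorem therefore reduces to the claim that Equation~\ref{eqn:fixpoint-COCOA} computes, for every $q\in Q$, the set of vertices of $G$ from which the system wins $G\otimes\mc A$ started at $q$.

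Third, I would derive this by specializing the standard nested-fixpoint characterization of parity-game winning regions (Equations~\ref{eqn:parityFixpoint}--\ref{eqn:fixpoint-parity}, after~\cite{W02,DBLP:journals/corr/BruseFL14}) to $G\otimes\mc A$, embedding the product into the formula with the same vector notation already used for Equation~\ref{eqn:fixpoint-parity}. The $|Q|$ rows carry one fixpoint variable per product state; the $\nu/\mu$ prefix has alternation depth equal to the number $n+1$ of colors of $\mc A$, so $\sigma$ is $\mu$ iff $n$ is odd; and the body $\psi_j$ for state $q_j$ is a disjunction over letters $x\in 2^\AP$, each disjunct $x\wedge\mathsf{OP}^{pl}_{q\in\tilde Q}X^c_{ind(q)}$ encoding one product transition. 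The outer $\LTLsquare\LTLdiamond[\cdot]$ with its enforceable-predecessor semantics handles the $\forall i.\exists o$ move of $G$ and the "land on a color-$c$ successor" bookkeeping, while $\mathsf{OP}^{pl}$ is $\bigvee$ precisely when the acceptor (system) resolves the product nondeterminism --- the system then takes the best successor in $\tilde Q$ --- and $\bigwedge$ when the rejector (environment) resolves it. In this light Equation~\ref{eqn:fixpoint-COCOA} is exactly Equation~\ref{eqn:fixpoint-parity} generalized from a deterministic objective automaton to an A-GFG alternating one (the deterministic case is recovered when every $\tilde Q$ is a singleton, making $pl$ irrelevant). Collapsing the auxiliary choice vertices of $G\otimes\mc A$ turns the generic $\psi_j$ of Equation~\ref{eqn:fixpoint-parity} into precisely the form displayed in Equation~\ref{eqn:fixpoint-COCOA}, and chaining the three facts while reading off $v_0\in W_1$ yields the theorem.

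I expect the only genuine obstacle to be this third step: one must verify that the fixpoint characterization of parity games is sound and complete under the move to an alternating objective automaton and the vector unfolding, i.e., that routing the acceptor's (resp. rejector's) per-transition choices to the system (resp. environment) and recording them with $\mathsf{OP}^{\acc}=\bigvee$ (resp. $\mathsf{OP}^{\rej}=\bigwedge$) does not change the winning region. This reduces to the explicit construction of $G\otimes\mc A$ as an ordinary parity game via single-player auxiliary choice vertices, after which Equations~\ref{eqn:parityFixpoint}--\ref{eqn:fixpoint-parity} apply verbatim and the collapse of those vertices is a routine rewriting; the substantive work, namely correctness and A-GFGness of the optimized product, has already been discharged in Lemma~\ref{lem:cocoa-product}.
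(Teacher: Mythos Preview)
Your proposal is correct and follows essentially the same route as the paper: the paper too reduces the theorem to (i) Lemma~\ref{lem:cocoa-product} giving A-GFGness and language equality of the optimized product, (ii) composability of A-GFG automata with games via~\cite[Thm.8]{AlternatingGFG}, and (iii) the vectorized parity fixpoint of Equation~\ref{eqn:fixpoint-parity} specialized to the product. Your write-up is more explicit about step (iii)---in particular the insertion and collapse of auxiliary choice vertices and the $\mathsf{OP}^{pl}$ bookkeeping---than the paper, which simply asserts that the step is ``simple'' and states the theorem without further proof.
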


\paragraph{Example.}
Consider the LTL specification $\LTLGF u \to (\LTLGF x \land \LTLGF y)$.
The optimized product contains only states $(q_0,p_0)$ and $(q_1,p_1)$.
The fixpoint formula is:
\begin{equation*}
\nu \left[ \begin{matrix} Z_{00} \\ Z_{11} \end{matrix} \right]\!\!.
\mu \left[ \begin{matrix} Y_{00} \\ Y_{11} \end{matrix} \right]\!\!.
\nu \left[ \begin{matrix} X_{00} \\ X_{11} \end{matrix} \right]\!\!.
\LTLsquare \LTLdiamond\!\left[
\begin{matrix}
   x Z_{00} Z_{11} ~\vee~ \bar x u Y_{00} ~\vee~ \bar x \bar u X_{00}  \\
   y Z_{00} Z_{11} ~\vee~ \bar y u Y_{11} ~\vee~ \bar y \bar u X_{11}
\end{matrix}
\right]
\end{equation*}
where the subscript index $ij$ denotes a state $(q_i,p_j)$ of the optimized COCOA product.
The LTL game is won by the system if and only if at the end of evaluation the initial game position $v_0$ belongs to $Z_{00}$.
This formula has a structure similar to the GR(1) Equation~\ref{eqn:exampleFixpoingGR1},
in particular it uses the conjunction over $Z$ variables
which leads to a reduction of the number of fixpoint iterations.
In contrast,
the parity formula in Equation~\ref{eqn:exampleParityFixpoint} misses this acceleration.

\subsection*{GR(1) synthesis as a special case}
We argue that for GR(1) specifications,
the COCOA fixpoint Equation~\ref{eqn:fixpoint-COCOA} becomes similar -- in spirit -- to GR(1) fixpoint Equation~\ref{eqn:fixpoint-GR1}.
Consider a GR(1) formula $\bigwedge_{i=1}^m \LTLGF a_i \to \bigwedge_{j=1}^n \LTLGF g_j$.
Its COCOA has two automata, $\mc A^1$ and $\mc A^2$.
The automaton $\mc A^1$ accepts exactly the words that violate one of the guarantees,
while $\mc A^2$ accepts exactly the words that violate one of the guarantees and one of the assumptions.
In order to reason able number of states in canonical automata, we assume henceforth that in the GR(1) formula, no assumption implies another assumption or guarantee, and no guarantee implies another guarantee.
The structures of $\mc A^1$ and $A^2$ are as follows.
\li
\- $\mc A^1$ has one state per guarantee ($n$ in total),
   while $\mc A^2$ has one per combination of liveness assumption and guarantee ($m \cdot n$ in total).
\- The optimized COCOA product has exactly one state for each assumption-guarantee combination,
   $m\cdot n$ in total, versus $n\cdot (m\cdot n)$ for the non-optimized product.
   Let $\{1,\ldots,m\}\x\{1,\ldots,n\}$ be the states of the optimized product,
   and let $(1,1)$ be initial.
\- For each state $(i,j)$:
   \li
   \-[--] for every $x \models \bar a_i \bar g_j$:\,
      $\delta((i,j), x) = \big(\{(i,j)\},2,\rej\big)$:\\
      there is a self-loop colored $0$ labelled $\bar a_i \bar g_j$;
   \-[--] for every $x \models a_i \bar g_j$:\,
      $\delta((i,j), x) = \big(\{(i',j)\mid i' \in \{1,\ldots,m\}\},1,\acc\big)$:\\
      on reading a letter satisfying $\bar g_j$,
      the system can choose a $1$-colored transition to any of
      $\big\{(i',j)\mid i' \in \{1,\ldots,m\}\big\}$; and
   \-[--] for every $x \models g_j$:\,
      $\delta((i,j), x) = \big(\{1,\ldots,m\}{\x}\{1,\ldots,n\},0,\rej \big)$:\\
      on reading a letter satisfying $g_j$,
      the environment can choose a $0$-colored transition to any state.
   \il
\il
The fixpoint formula for such COCOA product has the form:
\begin{align*}
\left[
  \begin{matrix}
    W_{1,1} \\ \smallvdots \\ W_{m,n}
  \end{matrix}
\right]
&=
\nu
\left[
  \begin{matrix}
    Z_{1,1} \\ \smallvdots \\ Z_{m,n}
  \end{matrix}
\right]\!\!.\,
\mu
\left[
  \begin{matrix}
    Y_{1,1} \\ \smallvdots \\ Y_{m,n}
  \end{matrix}
\right]\!\!.\,
\nu
\left[
  \begin{matrix}
    X_{1,1} \\ \smallvdots \\ X_{m,n}
  \end{matrix}
\right]\!\!.
\LTLsquare \LTLdiamond\!
  \left[
  \begin{matrix}
    \psi_{1,1} \\
    \smallvdots \\
    \psi_{m,n}
  \end{matrix}
  \right], \text{ where for all $i,j$:}\\
  \psi_{i,j} &=
  g_j (\!\!\!\!\!\!\!\!\bigwedge_{\scalebox{0.6}{\specialcellC{$i' \in \{1,\ldots,m\}$\\$j' \in \{1,\ldots, n\}$}}} \!\!\!\!\!\!\!\!\! Z_{i',j'})
  ~~\lor~~
  a_i \bar g_j (\!\!\!\!\!\!\!\!\bigvee_{i' \in \{1,\ldots,m\}} \!\!\!\!\!\!\!\!\!Y_{i',j})
  ~~\lor~~
  \bar a_i \bar g_j X_{i,j}
\end{align*}
The conjunction $\bigwedge_{i'\!,\,j'} Z_{i'\!,\,j'}$ and disjunctions $\bigvee_{i'} Y_{i'\!,\,j}$
enable faster information propagation which results in smaller number of fixpoint iterations.
Such information sharing is present in GR(1) fixpoint Equation~\ref{eqn:fixpoint-GR1},
and it is in this sense the COCOA approach generalizes GR(1) approach.
In contrast,
the parity fixpoint formula for GR(1) specifications misses this acceleration.

We now optimize the equation to reduce the number of variables.
First, we introduce variables $Y_j$ and $Z_j$, for $j \in \{1,...,n\}$, and
transform the formula into
\begin{equation*}
\left[
  \begin{matrix}
    W_1 \\ \smallvdots \\ W_n
  \end{matrix}
\right]
=
\nu
\left[
  \begin{matrix}
    Z_1 \\ \smallvdots \\ Z_n
  \end{matrix}
\right]\!\!.\,
\mu
\left[
  \begin{matrix}
    Y_1 \\ \smallvdots \\ Y_n
  \end{matrix}
\right]\!\!.\,
\left[
  \begin{matrix}
    \bigvee_{i} \Phi_{i,1} \\ \smallvdots \\ \bigvee_i \Phi_{i,n}
  \end{matrix}
\right]\!\!, \textit{ where}
\end{equation*}
\begin{equation*}
\left[
\begin{matrix}
  \Phi_{1,1} \\ \smallvdots \\ \Phi_{m,n}
\end{matrix}
\right]
=
\nu
\left[
  \begin{matrix}
    X_{1,1} \\ \smallvdots \\ X_{m,n}
  \end{matrix}
\right]\!\!.
\LTLsquare \LTLdiamond\!
  \left[
  \begin{matrix}
    \psi_{1,1} \\
    \smallvdots \\
    \psi_{m,n}
  \end{matrix}
  \right]\!\!, \textit{ where}
  \vspace{-2mm}
\end{equation*}
\begin{equation*}
  \psi_{i,j} =
  g_j (\!\!\!\!\!\!\!\!\bigwedge_{j' \in \{1,\ldots, n\}} \!\!\!\!\!\!\!\!\! Z_{j'})
  ~~\lor~~
  a_i \bar g_j Y_j
  ~~\lor~~
  \bar a_i \bar g_j X_{i,j}
\end{equation*}
Note that for every $i \in \{1,\ldots,m\}$,
the value $W_{i,j}$ computed by the old formula equals the value $W_j$ computed by the new formula ($W_{i,j} = W_i$),
where $j \in \{1,\ldots,n\}$.
We then introduce a fresh variable $Z$, and transform the formula to:
\begin{equation*}
W =
  \nu Z.\!\!\!\!\bigwedge_{j\in\{1,\ldots,n\}} \!\!\!\! \Psi_j,\textit{ where}
\end{equation*}
\begin{equation*}
\left[
  \begin{matrix}
    \Psi_1 \\ \smallvdots \\ \Psi_n
  \end{matrix}
\right]
=
\mu
\left[
  \begin{matrix}
    Y_1 \\ \smallvdots \\ Y_n
  \end{matrix}
\right]\!\!.\,
\left[
  \begin{matrix}
    \bigvee_{i} \Phi_{i,1} \\ \smallvdots \\ \bigvee_i \Phi_{i,n}
  \end{matrix}
\right]\!\!, \textit{ where}
\end{equation*}
\begin{equation*}
\left[
\begin{matrix}
  \Phi_{1,1} \\ \smallvdots \\ \Phi_{m,n}
\end{matrix}
\right]
=
\nu
\left[
  \begin{matrix}
    X_{1,1} \\ \smallvdots \\ X_{m,n}
  \end{matrix}
\right]\!\!.
\LTLsquare \LTLdiamond\!
  \left[
  \begin{matrix}
    g_1 Z ~\lor~ a_1 \bar g_1 Y_1 ~\lor~ \bar a_1 \bar g_1 X_{1,1} \\
    \smallvdots \\
    g_n Z ~\lor~ a_m \bar g_n Y_n ~\lor~ \bar a_m \bar g_n X_{m,n}
  \end{matrix}
  \right]
\end{equation*}
After this transformation,
we have $W = W_j$ for every $j \in \{1,\ldots,n\}$.
Finally, the last equations can be folded into the formula
\begin{equation*}
  W = \nu Z. \bigwedge_{j=1}^n \mu Y. \bigvee_{i=1}^m \nu X. \LTLsquare\LTLdiamond \big[g_j Z ~\lor~ a_i \bar g_j Y ~\lor~ \bar a_i \bar g_j X\big]
\end{equation*}
which is equal to Equation~\ref{eqn:fixpoint-GR1} modulo expressions in front of the variables.
Our prototype tool implements a generalized version of such formula optimization.

\section{Evaluation}
\label{sec:experiments}

Evaluation goals are:
{\sf (G1)} show that standard LTL synthesizers do not fit our synthesis problem,
{\sf (G2)} compare our approach against specialized GR(1) synthesizer, and
{\sf (G3)} compare the COCOA approach against the parity approach.

We implemented COCOA and parity approaches in a prototype tool \reboot.
It uses SPOT \cite{DBLP:conf/atva/Duret-LutzLFMRX16} to convert LTL specifications (the liveness part of GR(1)) to deterministic parity automata.
From it, \reboot builds COCOA using the construction described in~\cite{DBLP:conf/fsttcs/EhlersS22}.
The COCOA is then compiled into a fixpoint formula in Equation~\ref{eqn:fixpoint-COCOA},
and symbolically evaluated on the game graph.
For symbolic encoding of game positions and transitions,
we use the BDD library CUDD~\cite{Somenzi98cudd:cu}.

We compare our approaches with GR(1) synthesis tool \slugs~\cite{DBLP:conf/cav/EhlersR16}
and the LTL synthesis tool \strix~\cite{meyer2018strix} which represent the state of the art.
The experiments were performed on a Linux machine with AMD EPYC 7502 processor;
the timeout was set to $1$ hour.
To implement the comparison, we collected existing and created new benchmarks:
AMBA, lift, and robot on a grid.
Each specification is written in an extension of the \slugs format: it encodes a symbolic game graph using logical formulas over system and environment propositions, and an LTL property on top of it.
In total, there are $80$ benchmarks, all realizable.

The evaluation data is available at \url{https://doi.org/10.5281/zenodo.10448487}

\paragraph{AMBA and lift.}
We use two parameterized benchmarks inspired by~\cite{DBLP:journals/jcss/BloemJPPS12},
each having two versions,
a GR(1) and an LTL version.
The first specification encodes an elevator behaviour and is parameterized by the number of floors.
Its GR(1) specification has one liveness assumption and a parameterized number of guarantees
($\LTLGF \to \bigwedge_i \LTLGF$).
Lift's LTL version adds an additonal request-response assumption and has the form
$\LTLGF \land (\LTLGF \to \LTLGF) \to \bigwedge_i \LTLGF$,
which requires $5$ parity colors.
There are $24$ GR(1) instances and $21$ LTL instances,
with the number of Boolean propositions ranging from $7$ to $34$.
The AMBA specification describes the behaviour of an industrial on-chip bus arbiter serving a parameterized number of clients.
Its GR(1) version has the shape $\LTLGF \to \bigwedge_i \LTLGF$;
our new LTL modification replaces one safety guarantee $\varphi$ by $\LTLF\LTLG\varphi$,
which allows the system to violate it during some initial phase,
and we add an assumption of the form $\LTLGF \to \LTLGF$.
Overall, the AMBA's LTL specification has the form
$\LTLGF \land (\LTLGF \to \LTLGF) \to \LTLFG \land \bigwedge_i \LTLGF$,
and requires $7$ parity colors.
There are $14$ GR(1) instances and $7$ LTL instances;
the number of Boolean propositions is $22$ for the specification serving two clients, and $77$ for the $15$-client version.

\paragraph{Robot on a grid.}
This benchmark describes the standard scenario from robotics domain:
a robot moves on a grid,
there are walls, doors, pickup and delivery locations, and a moving obstacle.
When requested, the robot has to pickup a package and deliver it to the target location,
while avoiding collisions with the walls and the obstacle and passing through the doors only when they are open.
The GR(1) specification has parameterized number of assumptions and guarantees:
$\bigwedge_i \LTLGF \to \bigwedge_i \LTLGF$.
The LTL version introduces preferential paths:
the robot has to eventually always use it assuming that the moving obstacle only moves along her preferred path.
This yields the shape
$\LTLFG\land\bigwedge_i \LTLGF \to \LTLFG\land\bigwedge_i\LTLGF$ ($5$ colors).
There are $16$ maps of size $8{\x}16$ with varying number of delivery-pickup locations and doors.
The number of Boolean propositions ranges from $24$ to $53$.


\paragraph{G1: Comparing with LTL synthesizer.}
Figure~\ref{fig:eval} shows a cactus plot.
On these problems,
the LTL synthesizer \strix is slower than specialized solvers.
The reason is the sheer number of states in benchmark game arenas:
e.g., benchmark $\mathit{amba15}$ uses 77 Boolean propositions,
yielding the naive estimate of game arena size in $2^{77}$ states.
Solver \strix tries to construct an explicit-state automaton describing this game arena and the LTL property, which is a bottleneck.
In contrast, symbolic solvers like \slugs or \reboot represent game arenas symbolically using BDDs, and \reboot constructs explicit automata only for LTL properties.

\begin{figure}[tb]
  \center
  \includegraphics[width=0.23\columnwidth]{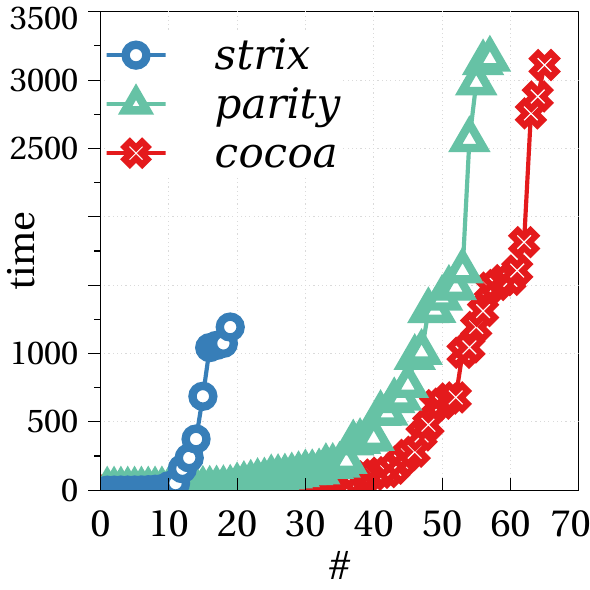}~~
  \includegraphics[width=0.23\columnwidth]{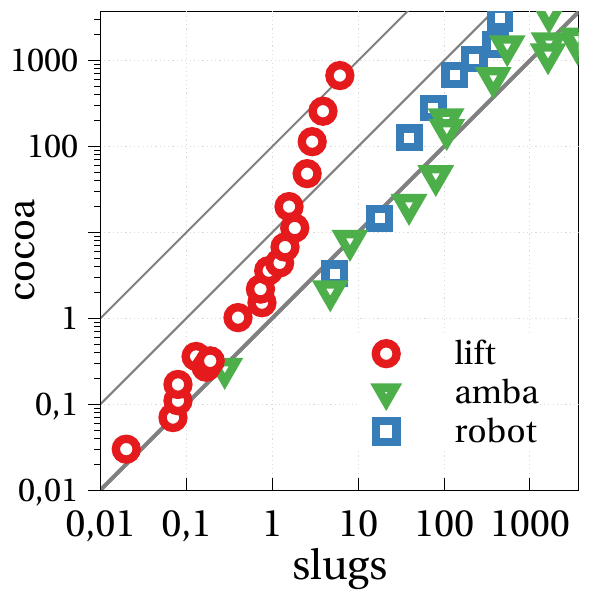}~~
  \includegraphics[width=0.23\columnwidth]{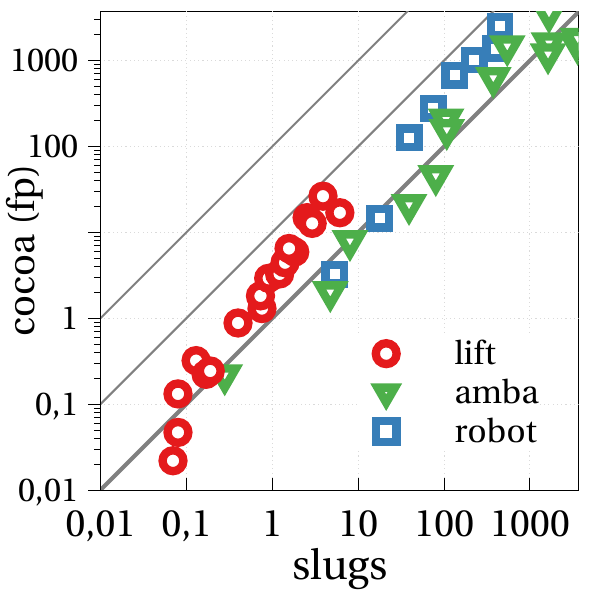}~~
  \includegraphics[width=0.23\columnwidth]{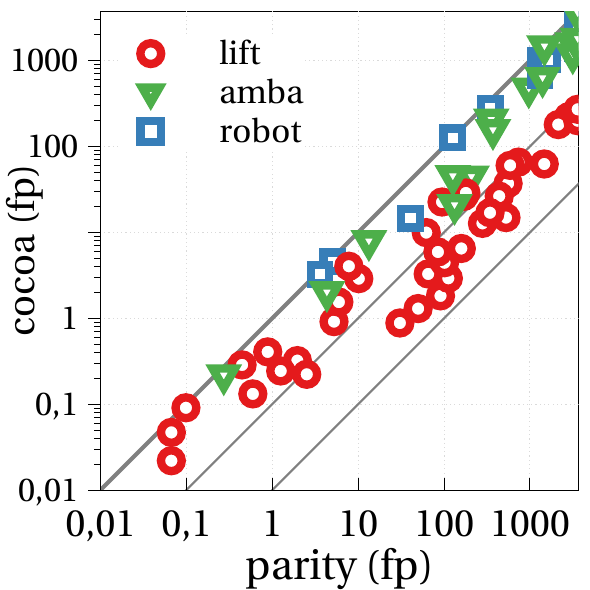}~
  \vspace{-2mm}
  \caption{From left to right: (G1) Cactus plot comparing our approaches with LTL synthesizer \strix~\cite{meyer2018strix};
  (G2a) Comparing COCOA-based approach with GR(1) synthesizer \slugs~\cite{DBLP:conf/nfm/Ehlers11};
  (G2b) The same but excluding LTL-to-parity translation time;
  (G3) Comparing COCOA and parity approaches (excluding LTL-to-parity translation time).}
\label{fig:eval}
\end{figure}

\paragraph{G2: Comparing with GR(1) synthesizer.}
The second diagram in Figure~\ref{fig:eval} compares the COCOA approach with \slugs on the GR(1) benchmarks.
The diagram shows the total solving time,
including the time \reboot spends calling SPOT for translating GR(1) liveness formula to parity automaton.
On Lift examples,
most of the time is spent in this translation when the number of floors exceeds $15$:
for instance, on benchmark $\mathit{lift20}$ \reboot spent $650$ out of total $670$ seconds in translation.
If we count only the time spent in fixpoint evaluation --
and that is a more appropriate measure since GR(1) liveness formulas have a fixed structure --
the performances are comparable, see the third diagram.

\paragraph{G3: COCOA vs.\ parity.}
The last diagram in Figure~\ref{fig:eval} compares COCOA and parity approaches on all the benchmarks,
and shows that the COCOA approach is significantly faster than the parity one.
We note that on these examples,
the number of states in the optimized COCOA product was equal to or less than the number of states in the parity automaton.
At the same time,
the number of fixpoint iterations performed by the COCOA approach was always significantly smaller than for the parity one.
Intuitively,
this is due to the structure of COCOA fixpoint equation that propagates information faster than the parity one.

\paragraph{Remarks.}
We did not compare with other symbolic approaches for solving parity or Rabin games~\cite{oink,di2018solving,banerjee2023fast}:
although they use symbolic algorithms,
as input these tools require games in explicit form or
their game encoding separates positions into those of player-1 and player-2;
both significantly affects the performance.

While all our benchmarks were realizable, the prototype tool was systematically compared against other approaches on both realizable and unrealizable random specifications using fuzz testing.

\section{Conclusion}
In this paper, we presented an approach for realizability checking on game graphs for the full set of omega-regular specifications. The approach utilizes chains of co-Büchi automata as intermediate representation of the specification as they can be minimized and made canonical in polynomial time.
It is the first practical application of chains of co-Büchi automata.
We showed that when building a fixed point equation from specifications given as such chains,
we naturally obtain the Generalized Reactivity(1) synthesis fixpoint formula as a special case.
This also holds for specifications that are only semantically equivalent to GR(1) specifications, due to the canonicity of the chain representation.
As a consequence, our approach bridges GR(1) synthesis and full LTL/omega-regular synthesis, and our experiments show that it is attractive for specifications that go slightly beyond GR(1), and definitively outperforms standard approaches.

The computed fixpoint equations have the nice property that they are also applicable beyond Boolean or finite game graphs.
For instance, they could be used for computing control strategies for discrete-time linear hybrid systems using a AND-Inverter graph representation with linear inequalities for representing state sets \cite{DBLP:conf/atva/DammDHJPPSWW07}.

We focused on realizability checking in this paper and deferred the extraction of implementations from the computed fixpoints to future work.
Another future direction is further optimizing the chain product construction,
by removing the transitions forbidden by the game graph.
Finally,
the fact that on some benchmarks the LTL-to-parity translation of the liveness specification takes a lot of time
indicates that it might be helpful to build COCOA compositionally,
or alternatively, to directly build COCOA from LTL.

\bibliographystyle{splncs04}
\bibliography{bib.bib}
\end{document}

\clearpage

\appendix

\section{Generalized Reactivity(1) Synthesis}
\label{sec:gr1explained}
\ak{It became outdated with the current defs of games. Also: the def of EnfPre is not CPre!}

In this section,
we review the commonly used synthesis approach for Generalized Reactivity(1) specifications~\cite{DBLP:journals/jcss/BloemJPPS12}.

\subsection*{GR(1) specifications}
Given a set of \emph{input} and \emph{output} propositions $\AP^I$ and $\AP^O$,
a GR(1) specification is defined by initial/safety/liveness assumptions
$( a_i, \varphi^A_S, (\varphi^A_L)_{i=1}^m)$
and guarantees
$( \varphi^G_I, \varphi^G_S, (\varphi^G_L)_{j=1}^n)$,
with each component of a special form.
Intuitively, the system should satisfy its guarantees if the environment satisfies its,
approximated by
$a_i \land \LTLG \varphi^A_S \land \bigwedge_{i=1}^m\LTLGF (\varphi^A_L)_i ~\to~ \varphi^G_I \land \LTLG \varphi^G_S \land \bigwedge_{j=1}^n \LTLGF (\varphi^G_L)_j$ (\cite{DBLP:conf/hvc/KleinP10}),
and formalized below.

\li
\- The \emph{initialization assumption} $a_i$ is a Boolean formula over $\AP^I$,
   the \emph{initialization guarantee} $\varphi^G_I$ -- over $\AP^I \cup \AP^O$.
   The system has to satisfy $a_i \to \varphi^G_I$ in the initial step.

\- The \emph{safety assumption} $\varphi^A_S$ is a Boolean formula over $\AP^I \cup \AP^O \cup \AP'^I$,
   the \emph{safety guarantee} $\varphi^G_S$ -- over $\AP^I \cup \AP^O \cup \AP'^I\cup \AP'^O$,
   where $\AP'^I$ and $\AP'^O$ are terms of the form $\LTLX p$ for $p \in \AP^I$ and $p \in \AP^O$ respectively.
   Thus, the safety components relate current and next inputs and outputs.
   The system has to satisfy its guarantee provided that the environment has satisfied its assumption in every step until this point.

\- The \emph{liveness assumption} $\varphi^A_L = ((\varphi^G_L)_1,\ldots,(\varphi^G_L)_m)$ and \emph{guarantee} $\varphi^G_L = ((\varphi^G_L)_1,\ldots,(\varphi^G_L)_n)$ are tuples of Boolean formulas over $\AP^I\cup\AP^O$.
   The system has to satisfy every liveness guarantees infinitely often
   if the environment satisfies every liveness assumptions infinitely often
   and always satisfies the safety assumption.
\il
All together, the semantics of GR(1) specification is defined via the LTL formula:
\begin{equation} \label{eqn:semanticsGR1}
  \Phi_\text{gr1} =  a_i ~\to~
  \varphi^G_I \land \big(\varphi^G_S \LTLW \neg \varphi^A_S\big) \land
  \big(\LTLG \varphi^A_S \land \!\bigwedge_{i=1}^m \LTLGF (\varphi^A_L)_i \to \bigwedge_{j=1}^n \LTLGF (\varphi^G_L)_j\big).
\end{equation}
Employing both, safety and liveness, components allows GR(1) to describe many complex interaction schemes between the environment and the system.
This semantics also allows for using deterministic Büchi automata as assumptions and guarantees,
which is achieved by requiring
the system to simulate the runs of these automata with the help of additional output propositions~\cite{DBLP:journals/corr/abs-1102-4119}.

\subsection*{GR(1) synthesis problem}
The \emph{GR(1) synthesis problem} is:
given a GR(1) specification with
assumptions $(a_i,\varphi^A_S,\varphi^A_L)$ and
guarantees $(\varphi^G_I,\varphi^G_S,\varphi^G_L)$,
find a system strategy $f: (2^{\AP^I})^* \to 2^{\AP^O}$
s.t.\ for every input sequence $i_0 i_1 \ldots \in (2^{\AP^I})^\omega$,
the input-output sequence
$(i_0 {\cup} f(i_0))\,(i_1 {\cup} f(i_0 i_1))\,(i_2 {\cup} f(i_0 i_1 i_2))\ldots$
satisfies $\Phi_\text{gr1}$ from Equation~\ref{eqn:semanticsGR1}.
It is a special case of the LTL sythesis problem.

The special shape of the specification enables encoding GR(1) synthesis into solving symbolic games.
Given a GR(1) specification,
we define the following game $G=(V_e,V_e^0,V_s,\delta,obj)$:
$V_e = 2^{\AP^I}\!\!\! \x 2^{\AP^O}$,
$V_s = 2^{\AP^I}\!\!\! \x 2^{\AP^O} \!\!\!\x 2^{\AP^I}$,
and the initial states $V_e^0 = V_e$ are all states.
The transition relation $\delta$ consists of $\big((i,o)(i,o,i')\big)$ for every $i,o,i'$
and of those $\big((i,o,i')(i',o')\big)$ that satisfy $\varphi^A_S \to \varphi^G_S$.
Thus, the game graph contains transitions that either violate the safety assumption or satisfy the safety guarantee.
The winning objective $obj$ is defined by the LTL formula
$a_i \to \big (\varphi^G_I \land \big(\bigwedge_i \LTLGF (\varphi^A_L)_i \to \bigwedge_j \LTLGF (\varphi^G_L)_j\big)\big)$.
Then, the system wins the game $G$ if and only if there exists a system strategy solving this GR(1) synthesis problem instance.

\medskip
\noindent
\emph{Remark.}
  The original paper~\cite{PPS06} and the subsequent journal paper~\cite{PP18} on GR(1) contain the mistake:
  their induced GR(1) games have as initial states the set $a_i\land\varphi^G_I$.
  This implies that the following specifications have the same realizability status
  (1) $a_i = \FALSE$ and $\varphi^G_I = \TRUE$, and
  (2) $a_i = \TRUE$ and $\varphi^G_I = \FALSE$.

\subsection*{Solving GR(1) games using fixpoints}
We now show how to solve GR(1) games by evaluating fixpoint formulas of mu-calculus on GR(1) game arenas.
For an introduction to using fixpoint formulas in synthesis,
we refer the reader to~\cite{HMC-RS},
and to~\cite{DBLP:reference/mc/BradfieldW18,niwinski2001rudiments} for mu-calculus in general.
The fixpoint formulas use the greatest ($\nu$) and least ($\mu$) fixpoint operators,
and the enforceable-predecessor operator $\EnfPre$.
Intuitively,
the operator $\EnfPre$ for a given a set of environment vertices $V\subseteq V_e$ returns a set of environment vertices $V'\subseteq V_e$
such that for every returned vertex the system player can enforce the play to move to one of the destination vertices (via an intermediate system vertex).
Formally,
consider a GR(1) game with safety assumption $\varphi^A_S$ and guarantee $\varphi^G_S$,
and an environment vertex set $\Phi \subseteq 2^{AP^I}\!\!\x 2^{AP^O}$,
then
\begin{equation*}
  \EnfPre(\Phi) =
  \big\{
    (i,o) \mid
      \forall i' \exists o'{:}~
      \big((i,o,i') \models \varphi^A_S\big) \to \big((i,o,i',o')\models \varphi^G_S\big) \land (i',o') \in \Phi
  \big\}.
\end{equation*}
For GR(1) games, the set of positions $W \subseteq V_e$ from which the system player wins the game can be characterized by a fixpoint equation~\cite{DBLP:conf/cav/EhlersR16,DBLP:journals/jcss/BloemJPPS12}:
\begin{equation} \label{app:eqn:fixpoint-GR1}
  W_\text{gr1} = \nu Z. \bigwedge_{j=1}^n \mu Y. \bigvee_{i=1}^m \nu X. \LTLsquare\LTLdiamond \big[(\varphi_{L,j}^G \wedge Z) ~\lor~ Y ~\lor~ (\neg \varphi_{L,i}^A \wedge X)\big].
\end{equation}
This fixpoint formula ensures that
the system selects the values of the output propositions
to move into states of one of the three kinds:
(1) states where it waits for an environment goal $\varphi^A_{L,i}$ to be reached, possibly forever
    ($\neg \varphi_{L,i}^A \land X$),
(2) states that move the system closer to reaching its goal number $j$ ($Y$), or
(3) winning states that satisfy system goal number $j$ $(\varphi_{L,j}^G \land Z)$.
The conjunction over all guarantees to the right of $\nu Z$ ensures
that all liveness guarantees are satisfied from all winning positions
(unless some environment liveness assumption is violated).
The disjunction over the environment goals permits the system
to wait for satisfaction of any of the environment liveness goals.
At the end of evaluating the fixpoint formula,
$Z$ consists of the system winning positions.
Then,
the system wins the GR(1) game
if and only if
for every $i \in 2^{\AP^I}$ there exists $o \in 2^{\AP^O}$
such that $(i,o) \models (a_i \to \varphi^G_I \land W_\text{gr1})$.

Fixpoint formulas are easy to evaluate symbolically using BDDs~\cite{DBLP:journals/tc/Bryant86},
which is one of the reasons behind scalability of GR(1) synthesis approach.

\subsection*{Intuition behind GR(1) fixpoint formula}
\label{app:GR1_intuition}
The set of winning positions for GR(1) games is computed by the fixpoint formula in Equation~\ref{app:eqn:fixpoint-GR1}.
This fixpoint formula encodes that in every step
the system selects the values of the output propositions
to achieve one of the three aims:
\li
\- it waits for the env to reach its goal $\varphi^A_{L,i}$, possibly forever:
   $(\neg \varphi_{L,i}^A \land X)$,
\- it gets closer to reaching system goal number $j$ (by going to $Y$), or
\- a system goal is reached and the position is winning $(\varphi_{L,j}^G \land Z)$.
\il
The part of the equation before $\EnfPre$ gives meaning to variables $Z$, $Y$, $X$:
\li
\- At the end of evaluating the fixpoint formula,
   $Z$ is the set of positions that are winning for the system player;
   this is ensured by using a \emph{greatest fixpoint} operator $\nu$
   that can be evaluated by starting with all positions
   while gradually removing the losing positions from $Z$.

\- The conjunction over all guarantees to the right of the greatest fixpoint operator ensures that
   all liveness guarantees need to be satisfied from all winning positions
   (unless some environment liveness assumption is violated).
\- The $\mu Y$ least fixpoint operation gradually populates $Y$ with positions
   from which the next system goal $\varphi^G_{L,j}$ can be reached,
   starting with the ones from which it can be reached immediately,
   followed by the ones needing one step, and so on
   -- these steps can include waiting for an environment goal to be reached.

\- The disjunction over the environment goals permits
   the system to wait for \emph{any} environment liveness goal to be reached.

\- Finally, the inner-most $\nu$ fixpoint operator builds biggest possible sets of positions
   from which the system can wait for environment goals to be reached
   while making progress towards the current system goal whenever it is reached.
\il
Such nested fixpoints are also used to characterize the set of winning positions in parity games.
Sohail and Somenzi observed,
for instance,
that GR(1) synthesis can be reduced to three-color parity game solving \cite{DBLP:conf/vmcai/SohailSR08},
and Bruse et al.~\cite{DBLP:journals/corr/BruseFL14} gave several formulations
of fixpoint characterizations of winning positions in parity games,
which all have the same $\nu. \mu. \nu$ alternation of the fixpoint operators for the three-color case.
However,
these fixpoint characterizations do not have a component
that is similar to the conjunction over the liveness guarantees or the disjunction
over the liveness assumptions from Equation~\ref{eqn:fixpoint-GR1}.

A nice property of the fixpoint characterization of Equation~\ref{eqn:fixpoint-GR1}
is that such a fixpoint equation is conceptually easy to evaluate \cite{DBLP:journals/jcss/BloemJPPS12}
using binary decision diagrams~\cite{DBLP:journals/tc/Bryant86}.

\newpage
\section{An example with more than three colors}

To give an example for a fixpoint formula with more than three colors,
we consider the LTL specification $\LTLG \LTLF((\neg a \LTLR \neg b) \LTLR \LTLX \neg c)$.
For this formula,
SPOT computes the parity automaton with $11$ states and $4$ colors depicted in Figure~\ref{fig:spotExample}.
\begin{figure}[tb]
\includegraphics[width=\columnwidth]{example_parity}
\caption{Parity automaton for $\LTLG \LTLF((\neg a \LTLR \neg b) \LTLR \LTLX \neg c)$}
\label{fig:spotExample}
\end{figure}
The COCOA computed from this parity automaton is collectively depicted in Fig.~\ref{fig:A1} ($\mathcal{A}_1$),
Fig.~\ref{fig:A2} ($\mathcal{A}_2$), and
Fig.~\ref{fig:A3} ($\mathcal{A}_3$).
The numbers in braces after the automaton state names in the figures are suffix language identifiers, which are all the same in this example.

\begin{figure}
\includegraphics[width=\columnwidth]{coco1}
\caption{Good-for-games co-B\"uchi automaton $\mc A_1$}
\label{fig:A1}
\end{figure}
\begin{figure}
\includegraphics[width=\columnwidth]{coco2}
\caption{Good-for-games co-B\"uchi automaton $\mc A_2$}
\label{fig:A2}
\end{figure}
\begin{figure}
\includegraphics[width=0.8\columnwidth]{coco3}
\caption{Good-for-games co-B\"uchi automaton $\mc A_3$}
\label{fig:A3}
\end{figure}

The product $\mathcal{P}_3$ consists of the following state combinations:
\begin{itemize}
\item (q1,q0,q0)
\item (q2,q0,q0)
\item (q3,q0,q0)
\item (q0,q1,q0)
\item (q1,q1,q0)
\item (q3,q1,q0)
\end{itemize}
The overall fixed point formula is quite lengthy and can be represented in textual form as follows:
{\small
\begin{verbatim}
nu X[0,0] X[0,1] X[0,2] X[0,3] X[0,4] X[0,5].
mu X[1,0] X[1,1] X[1,2] X[1,3] X[1,4] X[1,5].
nu X[2,0] X[2,1] X[2,2] X[2,3] X[2,4] X[2,5].
mu X[3,0] X[3,1] X[3,2] X[3,3] X[3,4] X[3,5].
- EnfPre( (!a & !b & !c ) & (X[0,0]’&...&X[0,5]')
    | (!a & !b & c ) & (X[1,3]’)
    | (a & !b & c ) & (X[1,0]’|X[1,4]’)
    | (b & c ) & (X[1,1]’)
    | (b & !c ) & (X[2,1]’)
    | (a & !b & !c ) & (X[2,2]’))
- EnfPre( (!a & !b ) & (X[1,3]’)
    | (a & !b & c ) & (X[1,0]’|X[1,4]’)
    | (b & c ) & (X[1,1]’)
    | (a & !b & !c ) & (X[2,0]’)
    | (b & !c ) & (X[2,1]’))
- EnfPre( (!a & !b ) & (X[0,0]’&...&X[0,5]')
    | (b & c ) & (X[1,1]’)
    | (a & !b & c ) & (X[1,2]’|X[1,5]’)
    | (b & !c ) & (X[2,1]’)
    | (a & !b & !c ) & (X[2,2]’))
- EnfPre( (!c ) & (X[0,0]’&...&X[0,5]')
    | (!a & !b & c ) & (X[1,3]’)
    | (b & c ) & (X[1,1]’)
    | (a & !b & c ) & (X[3,4’]))
- EnfPre( (!a & !b & !c ) & (X[0,0]’&...&X[0,5]')
    | (!a & !b & c ) & (X[1,3]’)
    | (b ) & (X[1,1]’)
    | (a & !b & !c ) & (X[2,5]’)
    | (a & !b & c ) & (X[3,4’]))
- EnfPre( (!a & !b ) & (X[0,0]’&...&X[0,5]')
    | (b ) & (X[1,1]’)
    | (a & !b & !c ) & (X[2,5]’)
    | (a & !b & c ) & (X[3,5’]))
\end{verbatim}
}
Note that the Dashes next to the fixed point variables come from the fact that we employ the \texttt{EnfPre} operator notation just like in the GR(1) fixpoint formula (Equation~\ref{eqn:fixpoint-GR1}), where it ranges over transitions. In Equation~\ref{eqn:simpleGR1ExampleMultiDimensionalFixpoint} in the main part of the paper, we kept the $\LTLsquare \LTLdiamond$ notation from Equation~\ref{eqn:parityFixpoint} to improve readability.

\section{Comparison with symbolic parity game solving}

While we show in the main part of the paper how to build a small product of COCOA to improve the efficiency of realizability checking, and then build a suitable fixpoint formula for performing the check, there is a relatively straight-forward alternative, although we did not find it spelled out in the literature. We could build a product of a determinsitic parity automaton for the specification imposed on top of the game graph and the game graph itself, and solve the resulting parity game symbolically.

There seems to be no publicly available symbolic parity-game solver.
While the synthesis tool \texttt{knor} (\url{https://github.com/trolando/knor}) uses a symbolic parity-game solving inside,
it is not accessible from the outside for symbolic game graphs -- rather, the solver builds a symbolic game from an explicitly given one.

To provide a comparison basis for our approach, we hence implemented a symbolic parity-game solving approach that is as well comparable to the COCOA-based approach in the paper. We apply Equation~\ref{eqn:parityFixpoint} while encoding the state of parity automaton using multiple variables per fixed-point level -- one per state in the parity automaton.

The resulting fixed point equation is typically smaller than the one built from COCOA, but does not contain the disjunctions and conjunctions encoding good-for-games branching, which speed up the computation. 
The comparison of run-times of Reboot vs.\ the parity-based synthesizer is given in Table~\ref{table:reboot-vs-parity} (on page ~\pageref{table:reboot-vs-parity}). These results show that the COCOA-based approach improves upon standard parity game solving even though the current implementation of the COCOA operations are optimized only very little in our prototype tool. Note that as written in the main part of the paper, in the LIFT examples, the computation time is dominated by LTL-to-parity translation, which needs to be done both in Reboot as well as the parity-game solver, so the differences on the actual solving time are more substantial than they appear in the table.

\begin{table}[h]
  \caption{Comparing COCOA-based approach (Reboot) with parity-based approach.}
\setlength{\tabcolsep}{5pt}
~~\begin{tabular}{|r|ccccccccccc|}
  \hline
LIFT GR(1) & 2    & 3    & 4    & 5    & 6    & 7    & 8  & 9   & 10  & 11   & 12 \\
\hline
Reboot     & 0.15 & 0.21 & 0.34 & 0.63 & 1.54 & 4.63 & 19 & 102 & 556 & 3007 & TO \\
Parity     & 0.14 & 0.22 & 0.34 & 1.51 & 3.5  & 5.95 & 21 & 103 & 587 & 3176 & TO \\
\hline
\end{tabular}

\

~~~~~\begin{tabular}{|r|cccccccccc|}
\hline
LIFT LTL & 2    & 3    & 4    & 5    & 6    & 7  & 8   & 9   & 10   & 11 \\
\hline
Reboot   & 0.39 & 1.03 & 1.08 & 2.94 & 6.91 & 25 & 101 & 546 & 3042 & TO \\
Parity   & 0.33 & 1.24 & 1.07 & 7.74 & 8.77 & 28 & 105 & 531 & 3157 & TO\\
\hline
\end{tabular}

\

\begin{tabular}{|r|cccccccccc|}
  \hline
AMBA GR(1) & 2    & 3    & 4  & 5   & 6   & 7   & 8   & 9    & 10   & 11   \\
\hline
Reboot     & 0.57 & 4.76 & 16 & 40  & 87  & 324 & 270 & 531  & TO   & 2517 \\
Parity     & 0.61 & 9.02 & 27 & 214 & 188 & 536 & 489 & 1949 & 2150 & TO \\
\hline
\end{tabular}

\

~~~\begin{tabular}{|r|cccc|}
  \hline
AMBA LTL & 2   & 3    & 4    & 5  \\
\hline
Reboot   & 96  & 570  & 1925 & TO \\
Parity   & 180 & 1337 & 2742 & TO \\
\hline
\end{tabular}
 \label{table:reboot-vs-parity}
\end{table}

\section{Experiments -- Specifications}

\subsection{Elevator controller}

This elevator-controller specification used for our experimental evaluation is inspired by the one from \cite{DBLP:journals/jcss/BloemJPPS12}.
The controller serves $n$ floors.
It maintains in its output $f$ ranging over $\{0,...n-1\}$ the current floor.
The controller must ensure that $f$ changes sequentially and does not jump.
The output $open$ means ``open the door''.
The inputs are $b_i$ for $i$ in $\{0,...,n-1\}$ indicating whether the button on floor $i$ is pressed.
The elevator has to serve every requesting floor, and only those.
Additionally, the elevator needs power to be able to move.
This is modelled by output $req\_power$,
indicating that the controller requests power for moving, and by input $grant\_power$,
indicating that the power is granted.
We add an assumption saying that if the controller requests power for long enough,
it gets it.
Finally, the lift should not move with the door open,
and the door should not close when there is an obstacle (modelled by input $obstacle$).
The LTL variant of the full specification for the case of three floors in StructuredSlugs \cite{DBLP:conf/cav/EhlersR16} format is given below (where the \texttt{ENV\_SPEC} section encodes additional full LTL assumptions while the \texttt{SYS\_LIVENESS} section encodes the liveness guarantees that are implicitly prefixed with $\LTLG \LTLF$).

\begin{scriptsize}
\begin{verbatim}
[INPUT]
b_0
b_1
b_2
obstacle
grant_pwr

[OUTPUT]
down
up
open
f:0...2
req_pwr

[ENV_INIT]
!b_0
!b_1
!b_2

[SYS_INIT]
f=0

[ENV_TRANS]
b_0 && !(f=0 && open) -> b_0'
b_1 && !(f=1 && open) -> b_1'
b_2 && !(f=2 && open) -> b_2'

b_0 && f=0 && open -> !(b_0')
b_1 && f=1 && open -> !(b_1')
b_2 && f=2 && open -> !(b_2')

[SYS_TRANS]
up <-> f=0 && f'=1 || f=1 && f'=2
down <-> f=1 && f'=0 || f=2 && f'=1
!(up && down)
(open && obstacle) -> open'
open -> !up && !down
!grant_pwr -> !up && !down
f=1 -> f'=1 || f'=2 || f'=0
f=0 -> f'=0 || f'=1
f=2 -> f'=2 || f'=1
f=0 && up -> b_1 || b_2
f=1 && up -> b_2
f=1 && down -> b_0 || !b_0 && !b_1 && !b_2
f=2 && down -> b_0 || b_1 || !b_0 && !b_1 && !b_2
f=0 && open -> b_0
f=1 && open -> b_1
f=2 && open -> b_2

[ENV_LIVENESS]
!obstacle

[ENV_SPEC]
G (F (req_pwr)) -> G (F (grant_pwr))

[SYS_LIVENESS]
b_0 -> f=0 && open
b_1 -> f=1 && open
b_2 -> f=2 && open
f=0 || b_0 || b_1 || b_2
\end{verbatim}
\end{scriptsize}

\subsection{AMBA controller}

This specification is taken from \cite{DBLP:journals/jcss/BloemJPPS12}.
The AMBA arbiter has to serve incoming requests for access to a shared bus.
There are different kinds of requests:
request to access the bus for an indefinite time until further notice,
for a period of time of four ticks, and a one-tick access request.
The basic requirement is that every access is eventually granted.
Since there is an access-for-indefinite-time request,
for this specification to be realizable,
one needs to assume that eventually such requests gets dropped.
The LTL version of the specification for the case of two clients in StructuredSlugs format is provided below.

\begin{scriptsize}
  \begin{verbatim}

[INPUT]
hready
hbusreq0
hlock0
hbusreq1
hlock1
hburst0
hburst1

[OUTPUT]
hmaster0
hmastlock
start
decide
locked
hgrant0
hgrant1
busreq
stateA1_0
stateA1_1
stateG2
stateG3_0
stateG3_1
stateG3_2
stateG10_1
req_ready

###############################################
# Environment specification
###############################################
[ENV_INIT]
!hready
!hbusreq0
!hlock0
!hbusreq1
!hlock1
!hburst0
!hburst1

[ENV_TRANS]
( hlock0 ->hbusreq0 )
( hlock1 ->hbusreq1 )

[ENV_LIVENESS]
(!stateA1_1)

###############################################
# System specification
###############################################
[SYS_INIT]
!hmaster0
!hmastlock
start
decide
!locked
hgrant0
!hgrant1
!busreq
!stateA1_0
!stateA1_1
!stateG2
!stateG3_0
!stateG3_1
!stateG3_2
!stateG10_1

[SYS_TRANS]
(!hmaster0) -> (hbusreq0 <->busreq)
(hmaster0) -> (hbusreq1 <->busreq)
(((!stateA1_1) && (!stateA1_0) && ((!hmastlock) || (hburst0) || (hburst1))) ->
    X((!stateA1_1) && (!stateA1_0)))
(((!stateA1_1) && (!stateA1_0) &&  (hmastlock) && (!hburst0) && (!hburst1)) ->
    X((stateA1_1) && (!stateA1_0)))
(((stateA1_1) && (!stateA1_0) && (busreq)) ->  X((stateA1_1) && (!stateA1_0)))
(((stateA1_1) && (!stateA1_0) && (!busreq) && ((!hmastlock) || (hburst0) || (hburst1)))
    ->  X((!stateA1_1) && (!stateA1_0)))
(((stateA1_1) && (!stateA1_0) && (!busreq) &&  (hmastlock) && (!hburst0) && (!hburst1))
    ->  X((!stateA1_1) && (stateA1_0)))
(((!stateA1_1) && (stateA1_0) && (busreq)) -> X((stateA1_1) && (!stateA1_0)))
(((!stateA1_1) && (stateA1_0) &&  (hmastlock) && (!hburst0) && (!hburst1))
    -> X((stateA1_1) && (!stateA1_0)))
(((!stateA1_1) && (stateA1_0) && (!busreq) && ((!hmastlock) || (hburst0) || (hburst1)))
    -> X((!stateA1_1) && (!stateA1_0)))
((!hready) ->X(!start))
(((!stateG2) && ((!hmastlock) || (!start) || (hburst0) || (hburst1))) ->X(!stateG2))
(((!stateG2) &&  (hmastlock) && (start) && (!hburst0) && (!hburst1))  ->X(stateG2))
(((stateG2) && (!start) && (busreq)) ->X(stateG2))
!(((stateG2) && (start)))
(((stateG2) && (!start) && (!busreq)) ->X(!stateG2))
(((!stateG3_0) && (!stateG3_1) && (!stateG3_2) && 
  ((!hmastlock) || (!start) || ((hburst0) || (!hburst1))))
  ->
  (X(!stateG3_0) && X(!stateG3_1) && X(!stateG3_2))) 
(((!stateG3_0) && (!stateG3_1) && (!stateG3_2) &&
  ((hmastlock) && (start) && ((!hburst0) && (hburst1)) && (!hready)))
  ->
 (X(stateG3_0) && X(!stateG3_1) && X(!stateG3_2))) 
(((!stateG3_0) && (!stateG3_1) && (!stateG3_2) && 
  ((hmastlock) && (start) && ((!hburst0) && (hburst1)) && (hready)))
  ->
 (X(!stateG3_0) && X(stateG3_1) && X(!stateG3_2))) 
 
(((stateG3_0) && (!stateG3_1) && (!stateG3_2) && ((!start) && (!hready))) ->
    (X(stateG3_0) && X(!stateG3_1) && X(!stateG3_2))) 
(((stateG3_0) && (!stateG3_1) && (!stateG3_2) && ((!start) && (hready))) ->
    (X(!stateG3_0) && X(stateG3_1) && X(!stateG3_2))) 

!(((stateG3_0) && (!stateG3_1) && (!stateG3_2) && ((start)))) 


(((!stateG3_0) && (stateG3_1) && (!stateG3_2) && ((!start) && (!hready))) ->
    (X(!stateG3_0) && X(stateG3_1) && X(!stateG3_2))) 
(((!stateG3_0) && (stateG3_1) && (!stateG3_2) && ((!start) && (hready))) ->
    (X(stateG3_0) && X(stateG3_1) && X(!stateG3_2))) 
!(((!stateG3_0) && (stateG3_1) && (!stateG3_2) && ((start)))) 

(((stateG3_0) && (stateG3_1) && (!stateG3_2) && ((!start) && (!hready))) ->
    (X(stateG3_0) && X(stateG3_1) && X(!stateG3_2))) 
(((stateG3_0) && (stateG3_1) && (!stateG3_2) && ((!start) && (hready))) ->
    (X(!stateG3_0) && X(!stateG3_1) && X(stateG3_2)))
!(((stateG3_0) && (stateG3_1) && (!stateG3_2) && ((start)))) 

(((!stateG3_0) && (!stateG3_1) && (stateG3_2) && ((!start) && (!hready))) ->
    (X(!stateG3_0) && X(!stateG3_1) && X(stateG3_2))) 
(((!stateG3_0) && (!stateG3_1) && (stateG3_2) && ((!start) && (hready))) ->
    (X(!stateG3_0) && X(!stateG3_1) && X(!stateG3_2)))

!(((!stateG3_0) && (!stateG3_1) && (stateG3_2) && ((start)))) 
((hready) -> ((hgrant0) <-> (X(!hmaster0))))
((hready) -> ((hgrant1) <-> (X(hmaster0))))
((hready) -> (!locked <-> X(!hmastlock)))
(X(!start) -> (((!hmaster0)) <-> (X(!hmaster0))))
(X(!start) -> (((hmaster0)) <-> (X(hmaster0))))
(((X(!start))) -> ((hmastlock) <->X(hmastlock)))
((decide  &&  !hlock0  &&  X(hgrant0)) -> X(!locked))
((decide  &&  !hlock0  &&  X(hgrant0)) -> X(!locked))
((decide  &&  !hlock1  &&  X(hgrant1)) -> X(!locked))
((decide  &&  !hlock1  &&  X(hgrant1)) -> X(!locked))
((!decide) -> (((!hgrant0) <-> X(!hgrant0))))
((!decide) -> (((!hgrant1) <-> X(!hgrant1))))
((!decide) -> (!locked <->X(!locked)))
(((!stateG10_1) && (((hgrant1) || (hbusreq1)))) -> X(!stateG10_1))
(((!stateG10_1) && ((!hgrant1) && (!hbusreq1))) -> X(stateG10_1))
(((stateG10_1) && ((!hgrant1) && (!hbusreq1))) -> X(stateG10_1))
!(((stateG10_1) && (((hgrant1)) && (!hbusreq1))))
(((stateG10_1) && (hbusreq1)) -> X(!stateG10_1))

[SYS_LIVENESS]
((((!hmaster0))  ||  !hbusreq0))
((((hmaster0))  ||  !hbusreq1))

[ENV_SPEC]
G(F(req_ready)) -> G(F(hready))

[SYS_SPEC]
F(G((decide  &&  !hbusreq0  &&  !hbusreq1) -> X(hgrant0)))
\end{verbatim}
\end{scriptsize}

\end{document}